\newcommand{\ignore}[1]{}
\newcommand{\asymeq}{=_{\tiny{\downarrow}}^?}
\newtheorem{appxlem}{Lemma}[section]
\newtheorem{corol}{Corollary}[appxlem]
\newcommand{\xddots}{%
  \raise 4pt \hbox {.}
  \mkern 6mu
  \raise 1pt \hbox {.}
  \mkern 6mu
  \raise -2pt \hbox {.}
}
\title{On Asymmetric Unification for the Theory of XOR with a Homomorphism} 
\author{Christopher Lynch\inst{1} \and
	Andrew M. Marshall\inst{2} \and
	Catherine Meadows\inst{3} \and
	Paliath Narendran\inst{4} \and
	Veena Ravishankar\inst{2}
}
\authorrunning{ Lynch, Marshall, Meadows, Narendran, Ravishankar}
\institute{Clarkson University,  Potsdam, NY, U.S.A.\\
	\email{clynch@clarkson.edu} 
	\and
	University of Mary Washington, Fredericksburg, VA, U.S.A.\\ 
	\email{marshall@umw.edu, vravisha@umw.edu} 
	\and
	Naval Research Laboratory, Washington, D.C., U.S.A.\\ \email{catherine.meadows@nrl.navy.mil} 
	\and
	University at Albany-SUNY, Albany, NY, U.S.A.\\ \email{pnarendran@albany.edu}
}
\date{\today}
\begin{document}	

\maketitle

\begin{abstract}
Asymmetric unification, or unification with irreducibility
constraints, is a newly developed paradigm that arose out of the
automated analysis of cryptographic protocols.  However, there are
still relatively few asymmetric unification algorithms. In this paper
we address this lack by exploring the application of automata-based
unification methods. We examine the theory of xor with a homomorphism,
ACUNh, from the point of view of asymmetric unification, and develop a
new automata-based decision procedure. Then, we adapt a recently developed
asymmetric combination procedure to produce a general asymmetric-ACUNh
decision procedure. Finally, we present a new approach for
obtaining a solution-generating asymmetric-ACUNh unification
automaton.  We also compare our approach to the most commonly used
form of asymmetric unification available today, variant unification.
\end{abstract}

\section{Introduction}
We examine the newly developed paradigm of asymmetric unification in
the theory of \emph{xor} with a homomorphism.  Asymmetric unification
is motivated by  requirements arising from
symbolic cryptographic protocol analysis~\cite{Erbatur13a}.
These symbolic analysis
methods require unification-based exploration of a space in which the
states obey rich equational theories that can be expressed as a
decomposition $R \uplus \Delta$ , where $R$ is a set of rewrite rules that is
confluent, terminating, and coherent modulo $\Delta$. However, in order to
apply state space reduction techniques, it is usually necessary for at
least part of this state to be in normal form, and to remain in normal
form even after unification is performed.  This requirement can be
expressed as an {\em asymmetric} unification problem $\{ s_1
=_{\downarrow} t_1, ~\ldots, ~s_n =_{\downarrow} t_n \} $ where the
$=_{\downarrow}$ denotes a unification problem with the restriction
that any unifier leaves the right-hand side of each equation
irreducible.

At this point there are relatively few such  algorithms.  Thus in most cases when asymmetric unification is needed,
an algorithm based on \emph{variant unification}~\cite{DBLP:journals/entcs/EscobarMS09} is used.  Variant unification turns an $R \uplus \Delta$-problem into a set of $\Delta$-problems.
Application of  variant unification requires that a number of conditions on the decomposition be satisfied.  In particular,  the set of $\Delta$-problems produced must always be finite (this is equivalent to the \emph{finite variant property}~\cite{comon-delaune}) and $\Delta$-unification must be decidable and finitary.   Unfortunately, there is a class of theories commonly occurring in cryptographic protocols that do not have decompositions satisfying these necessary conditions: theories including an operator $h$ that is homomorphic over an Abelian group operator $+$, that is $AGh$.  There are a number of cryptosystems that include an operation that is homomorphic over an Abelian  group operator, and a number of constructions that rely on this homomorphic property.  These include for example RSA \cite{DBLP:journals/cacm/RivestSA78}, whose homomorphic property is used in Chaum's blind signatures \cite{DBLP:conf/crypto/Chaum83}, and Pallier cryptosystems \cite{DBLP:conf/eurocrypt/Paillier99}, used in electronic voting and digital cash protocols. Thus an alternative approach is called for. 

In this paper we concentrate on asymmetric unification for a special case of $AGh$: 
the theory of \emph{xor} with homomorphism, or $ACUNh$. 
We first develop an automata-based $ACUNh$-asymmetric decision procedure.
We then apply a recently developed combination procedure for asymmetric unification algorithms
to obtain a general asymmetric decision procedure allowing for free function symbols.
This requires a non-trivial adaptation of the combination procedure, which originally required that the algorithms combined were not only decision procedures but produced complete sets of unifiers.  In addition, the decomposition of $ACUNh$ we use is $\Delta = ACh$.  It is known that unification modulo $ACh$  is undecidable~\cite{DBLP:conf/lics/Narendran96}, so our result also yields the first asymmetric decision procedure for which $\Delta$ does not have a decidable finitary unification algorithm. 

\ignore{We then consider the problem of producing complete sets of asymmetric unifiers for $ACUNh$.
We first show, via an example, that asymmetric unification modulo $ACUNh$ is not finitary. 
We then discuss how the decision procedure developed in this paper
could be adapted to produce an automaton that generates a (possibly infinite) complete set of solutions. We outline this
promising new approach to conclude the paper and point to future work.}
We then consider the problem of producing complete sets of asymmetric unifiers for $ACUNh$. We show how the decision procedure developed in this paper can be adapted to produce an automaton that generates a (possibly infinite) complete set of solutions. We then show, via an example, that asymmetric unification modulo $ACUNh$ is not finitary.    

\subsection{Outline}
Section~\ref{Sec:Prelims} provides a brief description of preliminaries. 
Section~\ref{Sec:Decpro} develops an automaton based decision
procedure for the $ACUNh$-theory. In Section~\ref{Sec:Subpro} an
automaton approach that produces substitutions is outlined. Section~\ref{Sec:Combination} 
develops the modified combination method needed to obtain general asymmetric algorithms. In Section~\ref{sec:conclusion} we conclude the paper and discuss further work.

\section{Preliminaries}\label{Sec:Prelims}

We use the standard notation of equational
unification~\cite{BaaderSnyd-01} and term rewriting
systems~\cite{BaaderSnyd-01}.
$\Sigma$-terms, denoted by $T(\Sigma,$ $ \mathcal{X})$, are built over the
signature $\Sigma$ and the (countably infinite) set of variables
$\mathcal{X}$.  The terms $t|_p$ and $t[u]_p$ denote respectively the
subterm of $t$ at the position $p$, and the term $t$ having $u$ as
subterm at position $p$. The symbol of $t$ occurring at the position
$p$ (resp. the top symbol of $t$) is written $t(p)$ (resp.
$t(\epsilon)$).  The set of positions of a term $t$ is denoted by
$Pos(t)$, the set of non variable positions for a term $t$ over a
signature $\Sigma$ is denoted by $Pos(t)_{\Sigma}$.  A
\emph{$\Sigma$-rooted} term is a term whose top symbol is in $\Sigma$.
The set of variables of a term $t$ is denoted by $Var(t)$.  A term is
\emph{ground} if it contains no variables.  

\ignore{
A \emph{$\Sigma$-substitution} $\sigma$ is an endomorphism of $T(\Sigma,\mathcal{X})$ denoted by $\{ x_1 \mapsto t_1, \dots , x_n
\mapsto t_n \}$ if there are only finitely many variables
$x_1,\dots,x_n$ not mapped to themselves. We call the \emph{domain} of
$\sigma$ the set of variables $\{ x_1,\dots, x_n
\}$ and the \emph{range} of $\sigma$ the set of terms $\{t_1,\dots,t_n\}$.  Application of a substitution $\sigma$ to a term
$t$ (resp. a substitution $\phi$) may be written $t\sigma$ (resp. $\phi\sigma$).

Given a first-order signature $\Sigma$, and a set $E$ of
$\Sigma$-axioms (i.e., pairs of $\Sigma$-terms, denoted by $l = r$),
the {\it equational theory} $=_E$ is the congruence closure of $E$
under the law of substitutivity. By a slight abuse
of terminology, $E$ will be often called an equational theory.

A $\Sigma$-equation is a pair of $\Sigma$-terms denoted by $s =^{?} t$.
An $E$-unification problem is a set of $\Sigma$-equations, 
$ \mathcal{S} = \{ s_{1} =^{?} t_{1} , \dots , s_{m} =^{?} t_{m}\} $. 
The set of variables of $\mathcal{S}$ is denoted by $Var(\mathcal{S})$. 

A solution to $\mathcal{S}$, called an \emph{E-unifier\/}, is a
substitution $\sigma$ such that $s_i \sigma =_E^{} t_i \sigma$ for all
$1 \leq i \leq m$.  A substitution $\sigma$ is \emph{more general
  modulo\/} $E$ than $\theta$ on a set of variables $V$, denoted as
$\sigma \leq_{E}^V \theta$, if there is a substitution $\tau$ such
that $x \sigma \tau =_{E} x \theta$ for all $x \in V$.  Two
substitutions $\theta_1^{}$ and $\theta_2^{}$ are {\em equivalent
  modulo\/} $E$ on a set of variables $V$, denoted as $\theta_1^{}
=_{E}^V \theta_2^{}$, if and only if $x \theta_1^{} =_{E} x
\theta_2^{} $ for all $x \in V$.  A \emph{complete set of
  $E$-unifiers} of $\mathcal{S}$ is a set of substitutions denoted by
$CSU_{E}(\mathcal{S})$ such that each $\sigma \in
CSU_{E}(\mathcal{S})$ is an $E$-unifier of $\mathcal{S}$, and for each
$E$-unifier $\theta$ of $\mathcal{S}$, there exists $\sigma \in
CSU_E(\mathcal{S})$ such that $\sigma \leq_{E}^{Var(\mathcal{S})}
\theta$.

Let $E_1$ and $E_2$ be two equational theories built over the disjoint
signatures $\Sigma_1$ and $\Sigma_2$. The elements of $\Sigma_i$ will
be called \emph{$i$-symbols}.  A term $t$ is an \emph{$i$-term} if and
only if it is of the form $t =f(t_1, , \ldots, t_n)$ for an $i$-symbol
$f$ or $t$ is a variable. An $i$-term is \emph{pure (or an $i$-pure
  term) } if it only contains $i$-symbols and variables. An equation
$s =^{?} t$ is $i$-pure (or just pure) iff there exists an $i$ such
that $s$ and $t$ are $i$-pure terms or variables. A subterm $s$ of an
$i$-term $t$ is called an \emph{alien subterm} (or just \emph{alien})
of $t$ iff it is a non-variable $j$-term, $j \neq i$, such that every
proper superterm of $s$ in $t$ is an $i$-term. A unification problem
$S$ is an \emph{$i$-pure problem} if all equations in $S$ are
$i$-pure.
}

\begin{definition}
	Let $\Gamma$ be an $E$-unification problem, let $\mathcal{X}$ denote the set of variables occurring 
	in $\Gamma$ and $\mathcal{C}$ the set of free constants occurring in $\Gamma$. For a given linear ordering $<$ on
	$\mathcal{X} \cup \mathcal{C}$, and for each $c \in \mathcal{C}$ define the set $V_c$ as $\{ x ~| ~x \text{ is a variable with } x < c \}$.
	An \emph{$E$-unification problem with linear constant restriction} (LCR) is an 
	$E$-unification problem with constants, $\Gamma$, where each constant $c$ in $\Gamma$
	is equipped with a set $V_c$ of variables. A solution of the problem is an $E$-unifier $\sigma$
	of $\Gamma$ such that for all $c, x$ with $x \in V_c$, the constant $c$ does not occur
	in $x \sigma$. We call $\sigma$ an \emph{$E$-unifier with LCR.}
\end{definition}

A \emph{rewrite rule} is an ordered pair $l \rightarrow r$ such that
$l, r \in T(\Sigma, \mathcal{X})$ and $l \not\in \mathcal{X}$.  We use
$R$ to denote a term rewrite system which is defined as a set of
rewrite rules.  The rewrite relation on $T(\Sigma, \mathcal{X})$,
written $t \rightarrow_R s$, hold between $t$ and $s$ iff there exists
a non-variable $p \in Pos_{\Sigma}(t)$, $l \rightarrow r \in R$ and a
substitution $\sigma$, such that $t |_p = l \sigma$ and $s =t[r
  \sigma]_p$.  The relation $\rightarrow_{R/E}$ on $T (\Sigma,
\mathcal{X})$ is $=_{E} \circ \rightarrow_{R} \circ =_{E}$.  The
relation $\rightarrow_{R,E}$ on $T (\Sigma, \mathcal{X})$ is defined
as: $t \rightarrow_{R,E} t'$ if there exists a position $p \in
Pos_{\Sigma}(t)$, a rule $l \rightarrow r \in R$ and a substitution
$\sigma$ such that $t|_{p} =_{E} l \sigma$ and $t' = t[r \sigma]_{p}$.
The transitive (resp. transitive and reflexive) closure of
$\rightarrow_{R,E}$ is denoted by $\rightarrow^{+}_{R,E}$
(resp. $\rightarrow^{*}_{R,E}$).  A term $t$ is \emph{$\rightarrow_{R,
    E}$ irreducible} (or in \emph{$R,E$-normal form}) if there is no
term $t'$ such that $t \rightarrow_{R,E} t'$.  If $\rightarrow_{R,E}$
is confluent and terminating we denote the irreducible version of a
term, $t$, by $t \downarrow_{R,E}$.

\begin{definition}
	We call $(\Sigma, ~E, ~R)$ a \emph{weak decomposition} of an equational theory $\Delta$ over
	a signature $\Sigma$ if $\Delta = R \uplus E$ and $R$ and $E$ satisfy the following conditions:
	\begin{enumerate}
		\item Matching modulo~$E$ is decidable.
		\item $R$ is terminating modulo $E$, i.e., the relation
		  $\rightarrow_{R/E}$ is terminating.
		\item The relation $\rightarrow_{R,E}$ is confluent and $E$-coherent, i.e., $\forall t_1, t_2, t_3$ if $t_1 \rightarrow_{R,E} t_2$ and 
		$t_1 =_{E} t_3$ then $\exists ~t_4, t_5$ such that $t_2 \rightarrow^{*}_{R, E} t_4$, $t_3 \rightarrow^{+}_{R, E} t_5$, and
		$t_4 =_{E} t_5$.  
	\end{enumerate}
\end{definition}

This definition is a modification of the definition in~\cite{Erbatur13a}.
where
asymmetric unification and the corresponding theory decomposition are
first defined.  The last restrictions ensure that $s
\rightarrow^{!}_{R/E} t$ iff $s \rightarrow^{!}_{R,E} t$
(see~\cite{DBLP:journals/entcs/EscobarMS09,Erbatur13a}).

\begin{definition}[Asymmetric Unification]\label{asym_prob_def}
	Given a weak decomposition $(\Sigma, E, R)$
	of an equational theory, a substitution $\sigma$
	is an \emph{asymmetric $R,E$-unifier} of a set $\mathcal{S}$ of asymmetric equations
	$\{ s_1 =_{\downarrow} t_1, ~\ldots, ~s_n =_{\downarrow} t_n \} $ iff for each
	asymmetric equations $s_i =_{\downarrow} t_i$, $\sigma$ is an $(E \cup R)$-unifier
	of the equation $s_i =^{?} t_i$ and $(t_i \downarrow_{R,E})\sigma$ is in $R,E$-normal
	form. A set of substitutions $\Omega$ is a \emph{complete set of asymmetric} 
	$R, E$-unifiers of $\mathcal{S}$ 
	(denoted $CSAU_{R \cup E}(\mathcal{S})$ or just $CSAU(\mathcal{S})$ if the background theory is clear) iff: 
	(i) every member of $\Omega$ is an asymmetric 
	$R,E$-unifier of $\mathcal{S}$, and (ii) for every asymmetric $R,E$-unifier $\theta$ of
	$\mathcal{S}$ there exists a $\sigma \in \Omega$ such that $\sigma \leq_{E}^{Var(\mathcal{S})} \theta$.
\end{definition}

\begin{example}
	Let $R = \{ x \oplus 0 \rightarrow x, ~x \oplus x \rightarrow 0 , ~x \oplus x \oplus y \rightarrow y \}$
	and $E$ be the $AC$ theory for $\oplus$. Consider the equation $y \oplus x =_{\downarrow} x \oplus a$,
	the substitution $\sigma_1 = \{ y \mapsto a \}$ is an asymmetric solution but, $\sigma_2 = \{ x \mapsto 0, ~y \mapsto a \}$ is not.
\end{example}

\begin{definition}[Asymmetric Unification with Linear Constant Restriction]
	Let $\mathcal{S}$ be a set of asymmetric equations with some LCR.
	A substitution $\sigma$ is an \emph{asymmetric $R,E$-unifier} of $\mathcal{S}$ with LCR  iff $\sigma$ is an asymmetric solution to $\mathcal{S}$ and $\sigma$ satisfies the LCR.
\end{definition}

\begin{definition}\label{def1}
  Let $R$ be a term rewriting system and ${E}$~be a set of identities.
  We say $(R, E)$ is 
  \emph{$R, E$-convergent} if and only if
  \begin{itemize}
  \item[(a)] $\rightarrow_{R, E}^{} \,$ is terminating, \emph{and}

  \item[(b)] for all terms $s$, $t$, if $s \, \approx_{R \cup E}^{} \, t$, there exist
  terms $s^{\prime}$, $t^{\prime}$ such that $
  s \, \rightarrow_{R, E}^{!} \, s^{\prime} , \;
  t \, \rightarrow_{R, E}^{!} \, t^{\prime} , \; \;
  \mathrm{and} \; \;
  s^{\prime} \, \approx_E^{} \, t^{\prime} $
  \end{itemize}
\end{definition}
\begin{definition}
  A term $t$ is an $R, \Delta$-normal form of a term $s$ if and only if $\mbox{$s \rightarrow_{R, \Delta}^{!}t$}$.
  This is often represented as $t \, = \, s\!\big\downarrow_{R, \Delta}$. 
\end{definition}

\ignore{
\begin{definition}Let $P = \{s_1 \asymeq t_1, \cdots, s_n \asymeq t_n\}$ be an ACUNh
asymmetric unification problem.  
We define $terms(P) =
\{s_1,\cdots,s_n,t_1,\cdots,t_n\}$.  Let $X$ be the set of variables
occurring in the problem.  Let $C$ be the set of constants occurring
in the problem.  Let $V$ be a set of fresh variables, which do not
occur in the problem.

We assume that every equation is immediately converted into normal
form with respect to the rewrite rule $h(x+y) \rightarrow h(x) +
h(y)$.
\end{definition}

\begin{definition}
Given a term $t_1 + \cdots + t_n$, each $t_i$ is called a {\it
  summand}.  Therefore each summand of a term will be of the form
$h^n(t)$, where $n \geq 0$ and $t \in C \cup X \cup V$. We define $summands(t)$ to be
the set of all summands of $t$. 
\end{definition}

\begin{definition}
Let $v_1, v_2 \in V$ be two variables occurring in some asymmetric
ACUNh-unification problem $P$.  We say that $v_1$ and $v_2$ are {\it
  equivalent variables in $P$\/} if, for all $t$ in $terms(P)$ and all
$n \geq 0$, $h^n(v_1)$ is a summand of $t$ if and only if $h^n(v_2)$
is a summand of $t$.  Whenever an asymmetric ACUNh-unification problem
$P$ contains equivalent variables $v_1$ and $v_2$, we remove all
summands containing $v_2$ from the equations of $P$, which gives us a
new asymmetric unification problem $P'$.  Note that $P$ has a solution
if and only if $P'$ does.  This simplification is necessary for
termination of our algorithm.
\end{definition}

\begin{definition}
Given a term $t$ define $nf(t)$ to be the term created by rewriting
$t$ to its normal form with respect to $h(x+y) \rightarrow h(x) +
h(y)$ and removing all equivalent variables.
\end{definition}

\begin{definition}
Any term $t$ can be written in the form $\sum N + \sum \{h(t) \; | \;
t \in H\}$ where no term in $N$ has $h$ at the root.  Then we define
$hterms(t) = H$ and we define $nonhterms(t) = N$.
 
 We consider two asymmetric ACUNh-unification problems to be identical
if they are identical up to AC and renaming of elements of $V$.
\end{definition}

\begin{definition}
A {\em disequality constraint\/} is an disequation of the form $x \not=
y$, where $x$ and $y$ are in $X$.  In our algorithm, we will consider
sets of disequality constraints.
\end{definition}
}


\section{An Asymmetric $ACUNh$-unification Decision Procedure via an Automata 
	Approach}~\label{Sec:Decpro}
\noindent
In this section we develop a new asymmetric unification algorithm
for the theory $ACUNh$. The theory $ACUNh$ consists of the following identities:
		$x + x  \approx  0,\;
		x + 0  \approx  x,\;
		h(x+y)  \approx  h(x) + h(y),\;
		h(0)  \approx  0,\;
		(x + y) + z  \approx  x + (y + z),\;
		x + y  \approx  y + x$

\noindent
Following the definition of asymmetric unification, we first 
decompose the theory into a set of rewrite rules, $R$, modulo a set
of equations, $\Delta$. Actually, there are two such 
decompositions possible.
\noindent 
The first decomposition keeps~$associativity$ and ~$commutativity$ as identities ${\Delta}^{}$
and the rest as rewrite rules. This decomposition has the following $AC$-convergent
term rewriting system $R_1^{}$:
 $ x + x  \rightarrow  0,\;$
  $x + 0  \rightarrow  x,\;$
  $x + (y + x)  \rightarrow y,\;$
  $h(x+y)  \rightarrow  h(x) + h(y),\;$
  $h(0)  \rightarrow  0$,
 as well as~$R_1^{\prime}$:
$  x + x  \rightarrow  0, \;
  x + 0  \rightarrow  x, \;
  x + (y + x)  \rightarrow  y, \;
  h(x) + h(y)  \rightarrow  h(x + y), \;
  h(0)  \rightarrow  0$
(when $+$ is given a higher precedence over~$h$).


The second decomposition has $associativity$, $commutativity$ and the distributive
homomorphism identity as~$\Delta$~\footnote{This is the background theory.}, i.e., $\Delta = ACh$. Our goal here is to prove that 
the following
term rewriting system~$R_2^{}$:
  $x + x  \rightarrow  0,\;$
$  x + 0  \rightarrow  x,\;$
$  x + (y + x)  \rightarrow  y,\;$
  $h(0)  \rightarrow  0$
 is $ACh$-convergent.
The proof for convergence of $\rightarrow_{R_2^{}, \, ACh}^{}$ is provided in Appendix~$A$.

\noindent
Decidability of asymmetric unification
for the theory $R_2^{}$, $ACh$ can be shown by automata
-theoretic methods analogous to
the method used for deciding the Weak
Second Order Theory of One successor
(WS1S)~\cite{Elgot,buchi1960weak}.
In WS1S we consider quantification over finite sets of natural
numbers, along with one successor function. All equations or formulas
are transformed into finite-state automata which accepts the strings that
correspond to a model of the
formula~\cite{klaedtke2002parikh,vardi2008automata}. This automata-based
approach is key to showing decidability of~WS1S, since the
satisfiability of WS1S formulas reduces to the automata
intersection-emptiness problem. We follow the same approach here. 

To be precise, what we show here is that \emph{ground} solvability
of asymmetric unification, for a given set of constants, is decidable.
We explain at the end of this section why this is equivalent to solvability in general, in Lemma~\ref{Lemmatwo} and Lemma~\ref{Lemma 13}. 

\paragraph{Problems with one constant}: For ease of exposition, let us
consider the case where there is only one constant~$a$. Thus every
ground term can be represented as a set of natural numbers. The
homomorphism~$\mathsf{h}$ is treated as a successor
function. Just as in WS1S, the input to the automata are column vectors of bits. The length of each column vector
is the number of variables in the problem.
$\Sigma=\left\{ \vphantom{b^b}
          \begin{psmallmatrix}
           0 \\
           0 \\
          \vdots \\
           0 
          \end{psmallmatrix}, \ldots ,\begin{psmallmatrix}
           1 \\
           1 \\
           \vdots \\
           1 
          \end{psmallmatrix} \right\}$.
The deterministic finite automata (DFA) are illustrated here. 
The $\mathsf{+}$~operator behaves like the 
\emph{symmetric set difference} operator.

We illustrate how an automaton is constructed for each equation in standard
form. In order to avoid cluttering up the diagrams
the dead state has been included only for the first automaton.
The missing transitions lead to the dead state by default
for the others. Recall that we are considering the case of one constant~$a$. 

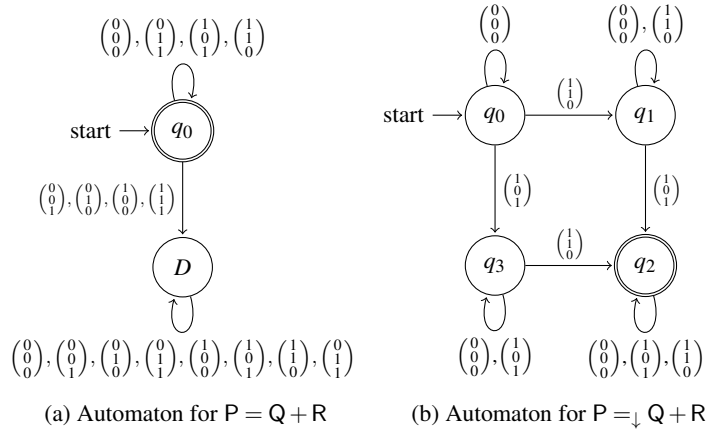
\begin{figure}[h]  
\centering 
  \begin{subfigure}[b]{0.4\linewidth}
   \begin{tikzpicture}[shorten >=1pt,node distance=1cm,auto]
   \node[state,initial,accepting] (q_0) {$q_0$};
   \node[state] (D) [below=of q_0] {$D$};
   \path[->]
    (q_0) edge [loop above] node  [scale=0.85]
    {$ \begin{psmallmatrix}
           0 \\
           0 \\
           0 
         \end{psmallmatrix}, \begin{psmallmatrix}
           0 \\
           1 \\
           1 
          \end{psmallmatrix}, \begin{psmallmatrix}
           1 \\
           0 \\
           1 
          \end{psmallmatrix}, \begin{psmallmatrix}
           1 \\
           1 \\
           0 
          \end{psmallmatrix}$} ()
             edge node [left] [scale=0.7]{$ \begin{psmallmatrix}
           0 \\
           0 \\
           1 
          \end{psmallmatrix}, \begin{psmallmatrix}
           0 \\
           1 \\
           0 
          \end{psmallmatrix}, \begin{psmallmatrix}
           1 \\
           0 \\
           0 
          \end{psmallmatrix}, \begin{psmallmatrix}
           1 \\
           1 \\
           1
         \end{psmallmatrix}$} (D)
       (D) edge [loop below]node [scale=0.85]{$\begin{psmallmatrix}
           0 \\
           0 \\
           0 
          \end{psmallmatrix},\begin{psmallmatrix}
           0 \\
           0 \\
           1 
          \end{psmallmatrix},\begin{psmallmatrix}
           0 \\
           1 \\
           0 
          \end{psmallmatrix},\begin{psmallmatrix}
           0 \\
           1 \\
           1 
          \end{psmallmatrix},\begin{psmallmatrix}
           1 \\
           0 \\
           0 
          \end{psmallmatrix},\begin{psmallmatrix}
           1 \\
           0 \\
           1 
          \end{psmallmatrix},\begin{psmallmatrix}
           1 \\
           1 \\
           0 
          \end{psmallmatrix},\begin{psmallmatrix}
           0 \\
           1 \\
           1 
          \end{psmallmatrix}$} ();  
\end{tikzpicture}
    \caption{Automaton for $\mathsf{P=Q+R}$} \label{fig:M1} 
  \end{subfigure}
\begin{subfigure}[b]{0.4\linewidth}
\begin{tikzpicture}[shorten >=1pt,node distance=2cm,on grid,auto]
   \node[state,initial] (q_0) {$q_0$};  
   \node[state] (q_1) [right=of q_0] {$q_1$};
   \node[state] (q_3) [below=of q_0] {$q_3$};
   \node[state,accepting] (q_2) [below=of q_1] {$q_2$};
   \path[->]
   (q_0) edge [loop above] node [scale=0.85]
    {$ \begin{psmallmatrix}
           0 \\
           0 \\
           0 
         \end{psmallmatrix}$} ()
         edge node[scale=0.75] {$ \begin{psmallmatrix}
           1 \\
           0 \\
           1 
          \end{psmallmatrix}$} (q_3)
          edge node [scale=0.7]{$ \begin{psmallmatrix}
           1 \\
           1 \\
           0 
          \end{psmallmatrix}$} (q_1)
          (q_1) edge [loop above] node [scale=0.85]
    {$ \begin{psmallmatrix}
           0 \\
           0 \\
           0 
         \end{psmallmatrix},\begin{psmallmatrix}
           1 \\
           1 \\
           0 
         \end{psmallmatrix}$} ()
         edge node [scale=0.7]{$ \begin{psmallmatrix}
           1 \\
           0 \\
           1 
          \end{psmallmatrix}$} (q_2)
          (q_2) edge [loop below] node [scale=0.85]
    {$ \begin{psmallmatrix}
           0 \\
           0 \\
           0 
         \end{psmallmatrix}$,$ \begin{psmallmatrix}
           1 \\
           0 \\
           1 
         \end{psmallmatrix}$,$ \begin{psmallmatrix}
           1 \\
           1 \\
           0 
         \end{psmallmatrix}$} ()
         (q_3) edge [loop below] node [scale=0.85]
    {$ \begin{psmallmatrix}
           0 \\
           0 \\
           0 
         \end{psmallmatrix}$,$ \begin{psmallmatrix}
           1 \\
           0 \\
           1 
         \end{psmallmatrix}$} ()
         edge node [scale=0.7]
    {$ \begin{psmallmatrix}
           1 \\
           1 \\
           0 
         \end{psmallmatrix}$} (q_2);
 \end{tikzpicture}
 \caption{Automaton for $\mathsf{P=_{\downarrow}Q+R}$} \label{fig:M2}
\end{subfigure}
\caption{Automata Construction}\label{Auto1}
\end{figure}

\noindent
$\bf{Fig.~\ref{fig:M1}}$: Let $\mathsf{P_i, Q_i} \text{ and } \mathsf{R_i}$ denote 
the ${i^{th}}$ bits of $\mathsf{P, Q} \text{ and } \mathsf{R} \;respectively$. 
$\mathsf{P_i}$ has a value~1, when either
$\mathsf{Q_i}$ or $\mathsf{R_i}$ has a value 1. We need 3-bit alphabet
symbols for this equation. The input for this automaton are column vectors of 3-bits each, i.e., $\Sigma=\{  \begin{psmallmatrix}
           0 \\
           0 \\
           0 
         \end{psmallmatrix},\cdots,\begin{psmallmatrix}
           1 \\
           1 \\
           1 
         \end{psmallmatrix}\}$. For example, if
$\mathsf{R_2}$ = 0, $\mathsf{Q_2}$ = 1, then $\mathsf{P_2}$ = 1. The corresponding
alphabet symbol is $\begin{psmallmatrix}
           P_2 \\
           Q_2 \\
           R_2
          \end{psmallmatrix}$ = $\begin{psmallmatrix} 1 \\ 0 \\ 1
          \end{psmallmatrix}$. 
          Hence, only strings with the alphabet symbols $\{$ $\begin{psmallmatrix}
           0 \\
           0 \\
           0 
         \end{psmallmatrix}, \begin{psmallmatrix}
           0 \\
           1 \\
           1 
          \end{psmallmatrix}, \begin{psmallmatrix}
           1 \\
           0 \\
           1 
          \end{psmallmatrix}, \begin{psmallmatrix}
           1 \\
           1 \\
           0 
          \end{psmallmatrix}$ $\}$ are accepted by this automaton. 
The rest of the input symbols $\{$ $\begin{psmallmatrix}
           0 \\
           0 \\
           1 
         \end{psmallmatrix}, \begin{psmallmatrix}
           1 \\
           1 \\
           1 
          \end{psmallmatrix}, \begin{psmallmatrix}
           0 \\
           1 \\
           0 
          \end{psmallmatrix}, \begin{psmallmatrix}
           1 \\
           0 \\
           0 
          \end{psmallmatrix}$ $\}$ go to the dead state~$D$, as they violate 
the XOR property. 
Note that the string $\begin{psmallmatrix}
           1 \\
           0 \\
           1 
          \end{psmallmatrix} \begin{psmallmatrix}
           1 \\
           1 \\
           0 
          \end{psmallmatrix}$ is accepted by this automaton. This corresponds
to $\mathsf{P=a+h(a)}$,  
$\mathsf{Q=h(a)}$ and~$\mathsf{R=a}$.

\noindent
$\bf{Fig.~\ref{fig:M2}}$: To preserve asymmetry on the right-hand side of this equation, $\mathsf{Q+R}$ should be 
irreducible. If either $\mathsf{Q}$ or $\mathsf{R}$ is empty, or if they 
have any term in common, then a reduction will occur. For example, 
if $\mathsf{Q}$ = $\mathsf{h(a)}$ and $\mathsf{R}$ = $\mathsf{h(a)+a}$, 
there is a reduction, whereas if $\mathsf{R}$ = $\mathsf{h(a)}$ 
and $\mathsf{Q}$ = $\mathsf{a}$, irreducibility is preserved, since 
there is no common term and neither one is empty. 
Since neither $\mathsf{Q}$ nor $\mathsf{R}$ can be empty,
any accepted string should have one occurrence of $\begin{psmallmatrix}
           1 \\
           0 \\
           1 
         \end{psmallmatrix}$ and one occurrence of~$\begin{psmallmatrix}
           1 \\
           1 \\
           0 
         \end{psmallmatrix}$.
\begin{figure}[h]  
\centering 
  \begin{subfigure}[b]{0.4\linewidth}
  \begin{tikzpicture}[>=stealth,shorten >=1pt,auto,node distance=3cm]
    \node[state,initial,accepting] (q_0)      [scale=0.85]          {$q_0$};
    \node[state] (q_1) [right of = q_0] [scale=0.85]{$q_1$};
    \path[->] (q_0) edge [bend left]  node [scale=0.7]{$\begin{psmallmatrix}
           1 \\
           0 
          \end{psmallmatrix}$} (q_1)
          edge [loop above] node [scale=0.85]{$\begin{psmallmatrix}
           0 \\
           0 
          \end{psmallmatrix}$} ()
              (q_1) edge [bend left] node [scale=0.7]{$\begin{psmallmatrix}
           0 \\
           1 
          \end{psmallmatrix}$} (q_0)
          edge [loop above] node [scale=0.85]{$\begin{psmallmatrix}
           1 \\
           1 
          \end{psmallmatrix}$} ();
\end{tikzpicture}
\caption{Automaton for $\mathsf{X=h(Y)}$} \label{fig:M3}
  \end{subfigure}
\begin{subfigure}[b]{0.4\linewidth}
  \begin{tikzpicture}[shorten >=1pt,node distance=1.7cm,on grid,auto]
   \node[state,initial] (q_0) {$q_0$};
   \node[state] (q_1) [right=of q_0] {$q_1$};
   \node[state,accepting] (q_2) [below=of q_1] {$q_2$};
   \path[->]
    (q_0) edge node [scale=0.7]{$\begin{psmallmatrix}
           1 \\
           0 
          \end{psmallmatrix}$} (q_1)
          edge [loop above] node [scale=0.85]
    {$ \begin{psmallmatrix}
           0 \\
           0 
         \end{psmallmatrix}$} ()
        (q_1) edge[bend left] node [scale=0.7]{$\begin{psmallmatrix}
           0 \\
           1 
          \end{psmallmatrix}$} (q_2)
           edge [loop right] node[scale=0.85]{$\begin{psmallmatrix}
           1 \\
           1
          \end{psmallmatrix}$}()
          (q_2) edge [loop right] node[scale=0.85]{$\begin{psmallmatrix}
           0 \\
           0
          \end{psmallmatrix}$}()
          edge [bend left] node [scale=0.7]{$\begin{psmallmatrix}
           1 \\
           0
          \end{psmallmatrix}$} (q_1);
          \end{tikzpicture}
          \caption{Automaton for $\mathsf{X=_{\downarrow}h(Y)}$} \label{fig:M4}
\end{subfigure}
\caption{Automata construction}\label{Auto2}
\end{figure}
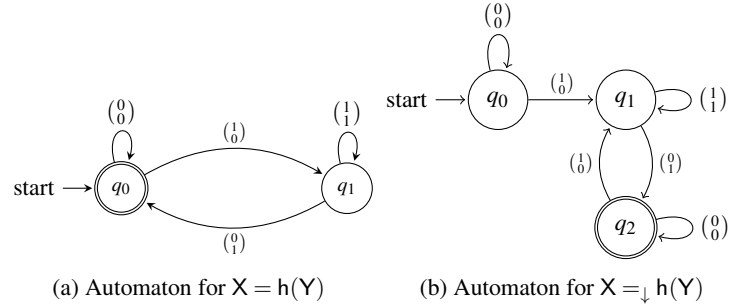

\noindent
$\bf{Fig.~\ref{fig:M3}}$: We need 2-bit vectors as alphabet symbols since we have 
two unknowns $\mathsf{X}$ and $\mathsf{Y}$. Remember that $\mathsf{h}$ acts like the successor function.
$q_0$ is the only accepting state. 
A state transition occurs with bit vectors $\begin{psmallmatrix}
           1 \\
           0         \end{psmallmatrix},\begin{psmallmatrix}
           0 \\
           1         \end{psmallmatrix}$. If $\mathsf{Y}$=1 in current state, then $\mathsf{X}$=1 in the next state, hence a transition occurs from $\mathsf{q_0}$ to $\mathsf{q_1}$, and vice versa. The ordering of variables is $\begin{psmallmatrix}
           Y \\
           X         \end{psmallmatrix}$.


\noindent
$\bf{Fig.~\ref{fig:M4}}$: In 
this equation, $\mathsf{h(Y)}$ should be in normal form. So
$\mathsf{Y}$ cannot be 0, but can contain terms of the form $\mathsf{u + v}$.
$\begin{psmallmatrix}
           Y \\
           X         \end{psmallmatrix}$ is the ordering of variables.
Therefore
the bit vector $\begin{psmallmatrix}
           1 \\
           0         \end{psmallmatrix}$ should be succeeded by $\begin{psmallmatrix}
           0 \\
           1         \end{psmallmatrix}$, with possible occurrences of the bit vector $\begin{psmallmatrix}
           1 \\
           1         \end{psmallmatrix}$ in between. Thus the string either ends with $\begin{psmallmatrix}
           0 \\
           1         \end{psmallmatrix}$ or $\begin{psmallmatrix}
           0 \\
           0         \end{psmallmatrix}$. 
For example, if
$\mathsf{Y}$ = $\mathsf{h(a)+a}$, then $\mathsf{X}$ = $\mathsf{h^2(a)+h(a)}$, which results in the string $\begin{psmallmatrix}
           1 \\
           0         \end{psmallmatrix}$$\begin{psmallmatrix}
           1 \\
           1         \end{psmallmatrix}$$\begin{psmallmatrix}
           0 \\
           1         \end{psmallmatrix}$ is accepted by this automaton.
           

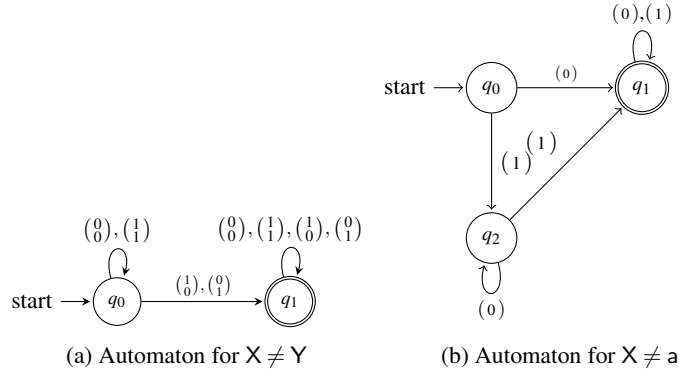
\begin{figure}[h]  
\centering 
  \begin{subfigure}[b]{0.4\columnwidth}
    \begin{tikzpicture}[>=stealth,shorten >=1pt,auto,node distance=2.3cm]
    \node[state,initial] (q_0)       [scale=0.8]         {$q_0$};
    \node[state,accepting] (q_1) [right of = q_0] [scale=0.8]{$q_1$};
    \path[->] (q_0) edge node [scale=0.7]{$\begin{psmallmatrix}
           1 \\
           0 
          \end{psmallmatrix},\begin{psmallmatrix}
           0 \\
           1
          \end{psmallmatrix}$} (q_1)
          edge [loop above] node[scale=0.85] {$\begin{psmallmatrix}
           0 \\
           0 
          \end{psmallmatrix},\begin{psmallmatrix}
           1 \\
           1
          \end{psmallmatrix}$} ()
              (q_1) edge [loop above] node[scale=0.85] {$\begin{psmallmatrix}
           0 \\
           0 
          \end{psmallmatrix},\begin{psmallmatrix}
           1 \\
           1
          \end{psmallmatrix},\begin{psmallmatrix}
           1 \\
           0 
          \end{psmallmatrix},\begin{psmallmatrix}
           0 \\
           1
          \end{psmallmatrix}$} ()
          ;
\end{tikzpicture}
\caption{Automaton for $\mathsf{X\not=Y}$} \label{fig:M4a}
  \end{subfigure}
\begin{subfigure}[b]{0.4\linewidth}
  \begin{tikzpicture}[shorten >=1pt,node distance=2cm,on grid,auto]
   \node[state,initial] (q_0) [scale=0.85]{$q_0$};
   \node[state,accepting] (q_1) [right=of q_0] [scale=0.85]{$q_1$};
   \node[state] (q_2) [below=of q_0] [scale=0.85]{$q_2$};
   \path[->]
    (q_0) edge node [scale=0.7]{$\begin{psmallmatrix}
           0 
          \end{psmallmatrix}$} (q_1)
          edge node 
    {$ \begin{psmallmatrix}
           1
         \end{psmallmatrix}$} (q_2)
        (q_1)
          edge[loop above] node [scale=0.85]{$\begin{psmallmatrix}
           0 
          \end{psmallmatrix}$,$\begin{psmallmatrix}
           1 
          \end{psmallmatrix}$} ()
          (q_2) edge [loop below] node [scale=0.85]
    {$ \begin{psmallmatrix}
           0 
         \end{psmallmatrix}$} ()
          edge node {$\begin{psmallmatrix}
           1 
          \end{psmallmatrix}$} (q_1);
          \end{tikzpicture}
          \caption{Automaton for $\mathsf{X\not=a}$} \label{fig:M4b}
\end{subfigure}
\caption{Automata construction}\label{Auto3}
\end{figure}
$\bf{Fig.~\ref{fig:M4a}}$: This automaton represents the disequality
$\mathsf{X^a\not=Y^a}$. In general, if there are two or more
constants, we have to guess which components are not equal. This
enables us to handle the disequality constraints mentioned in the next
section.


$\bf{Fig.~\ref{fig:M4b}}$: This automaton represents the disequality $\mathsf{X\not=a}$, where $a$ is a constant.
\begin{example}
Let $\Big\{ U=_{\downarrow}V+Y, \; W=h(V), \;Y=_{\downarrow}h(W) \Big\}$ be an asymmetric unification
problem.
We need 4-bit vectors and 3~automata since we have 4 unknowns in 3 equations, with bit-vectors represented in this ordering of set variables: $ \begin{psmallmatrix}
           V \\
            W\\
           Y \\
           U
         \end{psmallmatrix}$.
\noindent
We include the $\times$ (``don't-care'') symbol in state transitions to
indicate that the values can be either $0$ or~$1$. This is done to avoid
cluttering the diagrams. Note that here
this $\times$~symbol is
a placeholder for the variables which do not have any
significance in a given automaton. The automata constructed for this example are
indicated in~$\bf{Fig.~\ref{fig:M5}}$, $\bf{Fig.~\ref{fig:M6}}$ and $\bf{Fig.~\ref{fig:M7}}$.
The string~$\begin{psmallmatrix}
              1\\
           0 \\
            0 \\
           1
          \end{psmallmatrix}$$\begin{psmallmatrix}
              0\\
           1 \\
            0 \\
           0
          \end{psmallmatrix}$$\begin{psmallmatrix}
              0\\
           0 \\
            1 \\
           1
          \end{psmallmatrix}$$\begin{psmallmatrix}
              0\\
           0 \\
            0 \\
           0
          \end{psmallmatrix}$
is accepted by all the three automata.
The corresponding
asymmetric unifier is $ \{V \mapsto a, \, W\mapsto h(a), \, 
Y\mapsto h^2(a), \, U \mapsto ( h^2(a)+a ) \}. $    
\end{example}
\begin{figure}[h]  
\centering 

  \begin{subfigure}[b]{0.4\linewidth}
    \begin{tikzpicture}[shorten >=1pt,node distance=1.75cm,on grid,auto]
   \node[state,initial] (q_0) {$q_0$};
   \node[state] (q_1) [right=of q_0] {$q_1$};
   \node[state,accepting] (q_2) [below=of q_1] {$q_2$};
   \path[->]
    (q_0) edge node [scale=0.7]{$\begin{psmallmatrix}
           \times \\
           1 \\
           0\\
           \times
          \end{psmallmatrix}$} (q_1)
          edge [loop above] node [scale=0.85]
    {$ \begin{psmallmatrix}
           \times \\
           0 \\
           0\\
           \times
         \end{psmallmatrix}$} ()
        (q_1) edge [bend left] node[scale=0.7] {$\begin{psmallmatrix}
           \times \\
           0\\
           1\\
           \times 
          \end{psmallmatrix}$} (q_2)
          edge [loop above] node [scale=0.85]
    {$ \begin{psmallmatrix}
           \times \\
           1 \\
           1\\
           \times
         \end{psmallmatrix}$} ()
         (q_2) edge [loop below] node [scale=0.85]
    {$ \begin{psmallmatrix}
           \times \\
           0 \\
           0\\
           \times
         \end{psmallmatrix}$} ()
         edge [bend left] node [scale=0.7]{$\begin{psmallmatrix}
           \times \\
           1 \\
           0\\
           \times
          \end{psmallmatrix}$} (q_1);
          \end{tikzpicture}
          \caption{Automata for Example 3.1, Part 1}\label{fig:M5}
  \end{subfigure}
  \begin{subfigure}[b]{0.4\linewidth}
    \begin{tikzpicture}[shorten >=1pt,node distance=2cm,on grid,auto]
   \node[state,initial] (q_0) {$q_0$};  
   \node[state] (q_1) [right=of q_0] {$q_1$};
   \node[state] (q_3) [below=of q_0] {$q_3$};
   \node[state,accepting] (q_2) [below=of q_1] {$q_2$};
   \path[->]
   (q_0) edge [loop above] node [scale=0.85]
    {$ \begin{psmallmatrix}
           0 \\
           \times \\
           0 \\
           0
         \end{psmallmatrix}$} ()
         edge [left]node [scale=0.7]{$ \begin{psmallmatrix}
           1 \\
           \times \\
           0 \\
           1 
          \end{psmallmatrix}$} (q_3)
           edge node[scale=0.7] {$ \begin{psmallmatrix}
           0 \\
           \times\\
           1 \\
           1 
          \end{psmallmatrix}$} (q_1)
          (q_1) edge [loop above] node [scale=0.85]
    {$ \begin{psmallmatrix}
           0 \\
           \times \\
           0 \\
           0
         \end{psmallmatrix},\begin{psmallmatrix}
           0 \\
           \times \\
           1 \\
           1
         \end{psmallmatrix}$} ()
         edge node [scale=0.7]{$ \begin{psmallmatrix}
           1 \\
           \times \\
           0 \\
           1 
          \end{psmallmatrix}$} (q_2)
          (q_2) edge [loop below] node [scale=0.85]
    {$ \begin{psmallmatrix}
           0 \\
           \times \\
           0 \\
           0
         \end{psmallmatrix},\begin{psmallmatrix}
           1 \\
           \times \\
           0 \\
           1
         \end{psmallmatrix},\begin{psmallmatrix}
           0 \\
           \times \\
           1 \\
           1
         \end{psmallmatrix}$} ()
         (q_3) edge [loop below] node [scale=0.85]
    {$ \begin{psmallmatrix}
           0 \\
           \times \\
           0 \\
           0
         \end{psmallmatrix},\begin{psmallmatrix}
           1 \\
           \times \\
           0 \\
           1
         \end{psmallmatrix}$} ()
         edge node [scale=0.7]
    {$\begin{psmallmatrix}
           0 \\
           \times\\
           1 \\
           1 
          \end{psmallmatrix}$} (q_2);
 \end{tikzpicture}
 \caption{Automata for Example 3.1, Part 2} \label{fig:M6}
  \end{subfigure}
\caption{Automata example}
\end{figure}
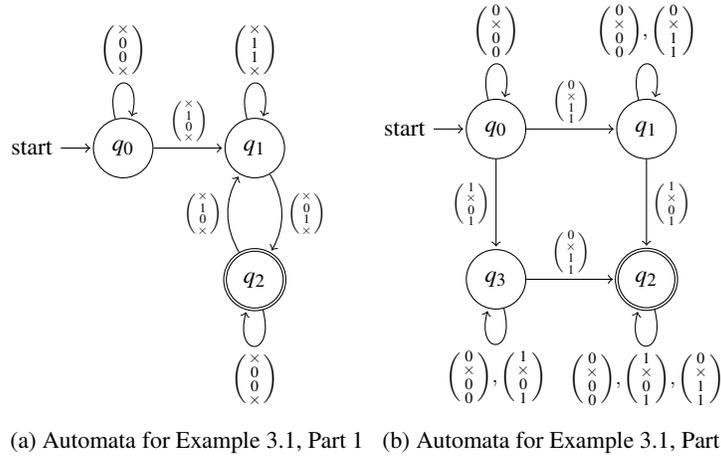

\begin{figure}[h]  
  
\begin{subfigure}[b]{0.4\linewidth}
   \begin{tikzpicture}[>=stealth,shorten >=1pt,auto,node distance=2.5cm]
    \node[state,initial,accepting] (q_0)                {$q_0$};
    \node[state] (q_1) [right of = q_0] {$q_1$};
    \path[->] (q_0) edge [bend left]  node [scale=0.7]{$\begin{psmallmatrix}
           1 \\
           0 \\
           \times\\
            \times
          \end{psmallmatrix}$} (q_1)
          edge [loop above] node [scale=0.85]{$\begin{psmallmatrix}
           0 \\
           0 \\
           \times \\
            \times
          \end{psmallmatrix}$} ()
              (q_1) edge [bend left] node [scale=0.7]{$\begin{psmallmatrix}
              0\\
           1 \\
            \times \\
           \times
          \end{psmallmatrix}$} (q_0)
          edge [loop above] node [scale=0.85]{$\begin{psmallmatrix}
           1 \\
           1 \\
            \times\\
            \times
          \end{psmallmatrix}$} ();
\end{tikzpicture}
\caption{Automata for Example 3.1, Part 3}\label{fig:M7}
\end{subfigure}
\begin{subfigure}[b]{0.5\columnwidth}
   \begin{tikzpicture}[shorten >=1pt,node distance=3.5cm,on grid,auto]
   \node[state,initial] [scale=0.7][align=left](q_0) {$h(x)+b=^?_{\downarrow}x+y$};  
   \node[state] (q_1) [scale=0.8][right=of q_0] {$b=^?_{\downarrow}y$};
   \node[state] (q_3) [scale=0.7][below=of q_0] {$h(x)=^?_{\downarrow}x+y$};
   \node[state,accepting] [scale=0.8](q_2) [below=of q_1] {$0=^?_{\downarrow}0$};
   \path[->]
   (q_0) edge [loop above,scale=0.2] node [align=left] [scale=0.9]{$  \{x \mapsto h(x)+b,$\\$y\mapsto h(y)\}$} ()
         edge[left] node[align=left] [scale=0.9] {$  \{x \mapsto h(x),$\\$y\mapsto h(y)+b\}$} (q_3)
           edge node [align=left] [scale=0.9]{$  \{x \mapsto b,$\\$y\mapsto h(y)\}$} (q_1)
          (q_1) 
         edge [left]node {$ \{y \mapsto b\}$} (q_2)
         (q_3) edge [loop right] node[align=left] [scale=0.9] {$  \{x \mapsto h(x),$\\$y\mapsto h(y)\}$} ();
 \end{tikzpicture}
    \caption{Substitution producing automaton} \label{fig:CE}  
  \end{subfigure}
\caption{Automata example}
\end{figure}
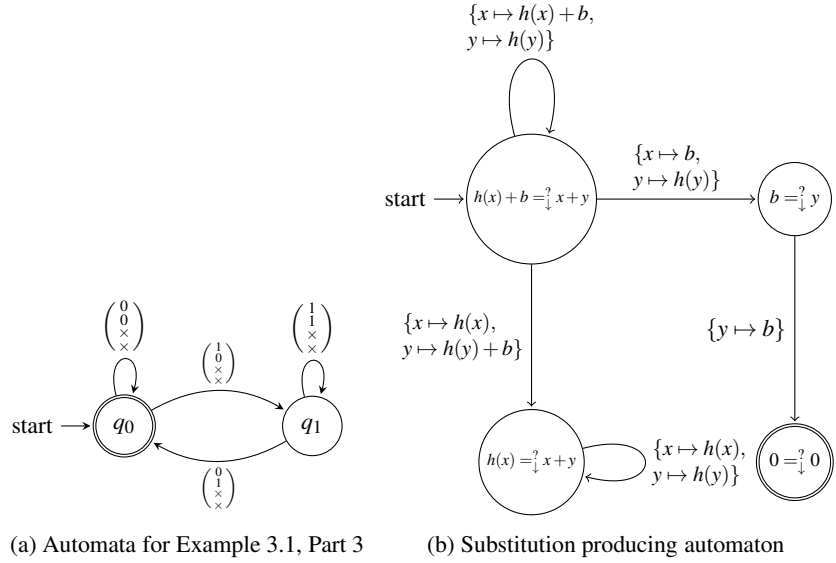
Once we have automata constructed for all the formulas, we take the
intersection and check if there exists a string accepted by all the
automata. If the intersection is not empty, then we have a
solution or an asymmetric unifier for the given problem.

\begin{algorithm}
	\caption{$ACUNh$-decision Procedure for a single constant}
	\label{Alg:ACUNhDec}
	\begin{algorithmic}
		\REQUIRE Asymmetric $ACUNh$-unification problem $S$.
		\STATE For $S$ construct automata for each equation as outlined
		in the paragraph ``Problems with one constant". 
		Let these be $A_1, A_2, \ldots, A_n$.
		\STATE ``Intersect the automata'': Let $\mathcal{A}$ be the automaton that
                recognizes \( \mathop{\bigcap}^{n}_{i=1} L( A_i ) . \)
		\IF{$L( \mathcal{A} ) = \emptyset$}
		\STATE return `no solution.'
		\ELSE
		\STATE return a solution. 
		\ENDIF
	\end{algorithmic}
\end{algorithm}

\paragraph{Problems with more than one constant}: This technique can be extended to the case
where we have more than one constant. 
Suppose we have $k$~constants, say~$c_1 , \ldots , c_k$.
We express each variable~$X$ in terms of the constants as follows: $
X ~ = ~ X_{}^{c_1} + \ldots + X_{}^{c_k} $. For example, if $\mathsf{Y}$
is a variable and $\mathsf{a,b,c}$ are the constants in the
problem, then we create the equation \mbox{$\mathsf{Y}$ = $\mathsf{Y^a+Y^b+Y^c}$.}

If we have an equation $\mathsf{X=h(Y)}$ with constants $\mathsf{a,b,c}$, then we have equations 
$\mathsf{X^a=h(Y^a)}$, 
$\mathsf{X^b=h(Y^b)}$ and $\mathsf{X^c=h(Y^c)}$.
However, if it is an asymmetric equation $\mathsf{X=_{\downarrow}h(Y)}$
all $\mathsf{Y^a}$, $\mathsf{Y^b}$ and $\mathsf{Y^c}$ cannot be zeros simultaneously. \label{LCR}

Similarly, if the equation to be solved is $\mathsf{X = W+Z}$, with $\mathsf{a,b,c}$ as constants,
we form the equations $\mathsf{X^a=W^a+Z^a}$, $\mathsf{X^b=W^b+Z^b}$ and
$\mathsf{X^c=W^c+Z^c}$ and solve the eq-uations. But if it is an
asymmetric equation $\mathsf{X = _{\downarrow}W+Z}$ then we cannot have $\mathsf{W^a}$,
$\mathsf{W^b}$, $\mathsf{W^c}$ to be all zero simultaneously, and similarly
with $\mathsf{Z^a}$, $\mathsf{Z^b}$, $\mathsf{Z^c}$.

Our approach is to design a nondeterministic
algorithm by 
guessing which constant component in each variable has to be~$0$, i.e.,
for each variable~$Y$ and each constant~$b$, we ``flip a coin''
as to whether $Y^b$ will be set equal to~$0$ by the target
solution\footnote{The linear constant restrictions in Section~5 can also be handled this way:
  a constant restriction of the form $a \; \not\in \; X$ can be taken care of by setting $\mathsf{X^a} = 0$}. Now for the case
$\mathsf{X =_{\downarrow} W+Z}$, we do the following:
\begin{quote}
\begin{tabbing}
for \= all constants $a$ do:\\
    \> if $\mathsf{X^a = W^a = Z^a = 0}$ then skip\\
    \> else \= if $\mathsf{W^a = 0}$ then set $\mathsf{X^a = Z^a}$\\
    \>      \> if $\mathsf{Z^a = 0}$ then set $\mathsf{X^a = W^a}$\\
    \>      \> if both $\mathsf{W^a}$ and $\mathsf{Z^a}$ are non-zero then set
$\; \mathsf{X^a =_\downarrow W^a + Z^a}$
\end{tabbing}
\end{quote}

Similarly, for the case $\mathsf{X =_{\downarrow}h(Y)}$ we follow these steps:
\begin{quote}
\begin{tabbing}
for \= all constants $a$ do:\\
    \> if $\mathsf{X^a = Y^a = 0}$ then skip\\
    \> else set \= $\mathsf{X^a =_{\downarrow} h(Y^a)}$
\end{tabbing}
\end{quote}
This is summarized in Algorithm~\ref{Alg:MultiCon}. Thus, it follows that

\begin{algorithm}
	\caption{Nondeterministic Algorithm when we have more than one constant}
	\label{Alg:MultiCon}
	\begin{algorithmic}
	\IF{ there are $m$ variables and $k$ constants}
	\STATE represent each variable in terms of its $k$~constant components.
			\STATE Guess which constant components have to be~$0$.
			\STATE Form symmetric and asymmetric equations for each constant.
			\STATE Solve each set of equations by the Deterministic Finite Automata
			(DFA) construction as outlined in Algorithm~\ref{Alg:ACUNhDec}.  
			\ENDIF
			
	\end{algorithmic}
\end{algorithm} 


\begin{theorem}
  Algorithm~\ref{Alg:MultiCon} is a decision procedure for ground asymmetric unification modulo~$( R_2^{}, \, ACh )$.
\end{theorem}
\begin{proof}
This holds by construction, as outlined in ``Problems with only one constant" and ``Problems with more than one constant".
\end{proof}

We now show that general asymmetric unification modulo~$ACUNh$, where
the solutions need not be ground solutions over the current set of
constants, is decidable by showing that a general solution exists if and only
if there is a \emph{ground} solution in the extended signature where we add
an extra constant.

We represent each term as a sum of terms of the form~$h_{}^i (\alpha)$ where $\alpha$
is either a constant or variable. The superscript (power)~$i$ is referred to as
the \emph{degree} of the simple term~$h_{}^i (\alpha)$. The degree of a term is the
maximum degree of its summands.

\ignore{
\begin{appxlem}~\label{Lemmaone}
  Let $t$ be an irreducible term and $d$ be its degree. Let $\, \mathcal{V}ar(t) \; = \; \{ X_1, X_2 , \ldots , X_n \}$.
  Suppose $c$~is a constant that does not appear in~$t$. Then $ t \theta$ is irreducible, where \[ \theta \; = \; \big\{ X_1 \mapsto c, \; X_2 \mapsto h^{d+1} (c), \; X_3 \mapsto h^{2(d+1)} (c), \; \ldots , \;
  X_n \mapsto h^{(n - 1)(d+1)} (c) \big\}. \]
\end{appxlem}
}

\begin{appxlem}~\label{Lemmatwo}
  Let $t$ be an irreducible term and $d$ be its degree. Let $\, \mathcal{V}ar(t) \; = \; \{ X_1, X_2 ,$ \ldots , $ X_n \}$.
  Suppose $c$~is a constant that does not appear in~$t$. Then
  for any~$D > d$, $ t \theta$ is irreducible,
  where $ \theta \; = \; \big\{ X_1 \mapsto c, \; X_2 \mapsto h^{D} (c), \; X_3 \mapsto h^{2D} (c), \; \ldots , \;
  X_n \mapsto h^{(n - 1)(D)} (c) \big\}. $
\end{appxlem}

\begin{appxlem}~\label{Lemma 13}
Let $\, \Gamma ~ = ~ \big\{ s_1^{} \, {\approx}_{\downarrow}^? \, t_1^{}, \ldots , s_n^{} \, {\approx}_{\downarrow}^? \, t_n^{}
\big\}$ be an asymmetric unification problem. Let $\beta$ be an asymmetric unifier of~$\Gamma$
and~$V = \mathcal{V}\!\mathcal{R} an ( \beta ) = \{ X_1, \ldots X_m \}$. Let
$D = 1 + \mathop{max}\limits_{1 \le i \le n}^{} degree (   s_i^{} \beta, t_i^{} \beta )$, and $c$~be a constant that
does not appear in~$\Gamma$. Then $\theta = \big\{ X_1 \mapsto c, \; X_2 \mapsto h^{D} (c),
\; \ldots , \; X_m \mapsto h^{(n - 1)D} (c) \big\} $ is an asymmetric unifier of~$\; \Gamma$.
\end{appxlem}

\noindent
General solutions \emph{over variables}, without this extra constant~$c$, can be enumerated
by back-substituting (abstracting) terms of the form~$h_{}^j (c)$ and checking whether the
obtained substitutions are indeed solutions to the problem.

\noindent
The exact complexity of this problem is open.



\section{Automaton to find a complete set of unifiers}\label{Sec:Subpro}


In this section we create automata to find all solutions of an ACUNh asymmetric unification problem with constants. 
We also have linear constant restrictions and disequalities for combination.
Our terms will be built from elements in the set described below.  

\begin{definition}
Let $C$ be a set of constants and $X$ be a set of variables.  Define $H(X,C)$ as the set $\{h^i(t) \,\,|\,\, t \in X \cup C\}$.  
We also define $H_n(X,C)$ as $\{h^i(t) \,\,|\,\, t \in X \cup C, i \leq n\}$.
For any object $t$ 
we define 
$Const(t)$ to be the set of constants in $t$, except for 0.
For an object $t$, define $H(t) = H(Var(t),Const(t))$ and  $H_n(t) = H_n(Var(t),Const(t))$. 
\end{definition}


Terms are sums.  We often need to talk about the multiset of terms in a sum.

\begin{definition}
Let $t$ be a term whose $R_h$ normal form is $t_1 + \cdots +t_n$.  Then we define $mset(t) = \{t_1,\cdots,t_n\}$.  Inversely, if $T = \{t_1,\cdots,t_n\}$ then $\Sigma T = t_1 + \cdots + t_n$.
\end{definition}

A term in normal form modulo $R_1$ can be described as a sum in the following way.

\begin{theorem}
Let $t$ be a term in $R_1$ normal form. Then there exists an $H \subseteq H(t)$ such that $t = \Sigma H$.
\end{theorem}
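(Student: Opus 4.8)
The plan is to produce the required $H$ explicitly by setting $H := mset(t)$ and then discharging the two obligations this creates: that $\Sigma H = t$, and that $H \subseteq H(t)$. Since $R_1$ extends $R_h$, a term in $R_1$ normal form is in particular in $R_h$ normal form, so by the definitions of $mset$ and $\Sigma$ we have $\Sigma\, mset(t) = t$ for free; the content of the argument is therefore the inclusion $mset(t) \subseteq H(t)$, which in turn rests on a description of what an $R_1$-irreducible term looks like.

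First I would pin down the shape of an $R_1$ normal form by structural induction on terms. Recall $R_1$ consists of the homomorphism rules of $R_h$ (pushing $h$ inward over $+$ and collapsing $h$ applied to $0$) together with the unit and nilpotence rules $x + 0 \to x$ and $x + x \to 0$, all modulo associativity--commutativity of $+$. The induction shows that an $R_1$-irreducible term $t$ is either $0$ --- in which case we take $H = \emptyset$ and invoke $\Sigma \emptyset = 0$ --- or, after flattening $+$ by AC, a sum $t_1 + \cdots + t_n$ with $n \ge 1$ of pairwise distinct summands such that: no $+$ occurs at the top of any $t_j$; $h$ never sits immediately above a $+$ nor above $0$, so each $t_j$ has the form $h^{i_j}(s_j)$ with $s_j$ a variable or a constant; and $s_j \ne 0$, since otherwise $h^{i_j}(0)$ (or a bare $0$ summand) would be reducible. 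Pairwise distinctness of the summands is exactly what $x + x \to 0$ guarantees, so $\{t_1,\dots,t_n\}$ is genuinely a set.

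Given this description, each $s_j$ is a variable of $t$ or a non-$0$ constant of $t$, hence $s_j \in Var(t) \cup Const(t)$ and therefore $t_j = h^{i_j}(s_j) \in H(Var(t),Const(t)) = H(t)$. Thus $H = \{t_1,\dots,t_n\} \subseteq H(t)$, and by the definition of $\Sigma$ together with $t$ being in $R_h$ normal form, $\Sigma H = t_1 + \cdots + t_n = t$, as required.

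Everything after the normal-form characterization is bookkeeping, so the main (and only mildly delicate) step is that characterization: one must check that the homomorphism rules really do drive every occurrence of $h$ down onto a single variable or constant and annihilate $h$ applied to $0$, being careful that working modulo AC does not leave a redex hidden inside a flattened sum, and that the unit/nilpotence rules leave no $0$ summand and no repeated summand. I do not anticipate a genuine obstacle here; the theorem is essentially a repackaging of the definition of $R_1$ normal forms.
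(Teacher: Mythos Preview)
Your proposal is correct and follows essentially the same route as the paper: argue that irreducibility under the homomorphism rule forces every summand to have the shape $h^i(s)$ with $s$ atomic, and that irreducibility under nilpotence rules out repeated summands, so $H := mset(t)$ works. You are simply more careful than the paper's two-sentence sketch, explicitly treating the case $t = 0$, the elimination of $0$ as an inner argument via $h(0)\to 0$ and $x+0\to x$, and the passage from multiset to set.
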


\begin{proof}
Since $t$ is reduced by $h(x+y) \rightarrow h(x) + h(y)$, it cannot have an $h$ symbol above a $+$ symbol.  So it must be a sum of terms of the form $h^i(s)$ where $i \geq 0$ and $s$ is a constant.  Since $t$ is also reduced by $R_2$, there can be no duplicates in the sum.  
\end{proof}

We show that every substitution $\theta$ that is irreducible with respect to $R_1$, can be represented as a sequence of smaller substitutions, which we will later use to construct an automaton. 


\begin{definition}
Let $\zeta$ be a substitution and $X$ be a set of variables.  Then $\zeta$ is a {\em zero substitution on $X$} if $Dom(\zeta) \subseteq X$ and $x\zeta = 0$ for all $x \in Dom(\zeta)$.
\end{definition}

\begin{theorem}

Let $t$ be an object and
$\theta$ be a ground substitution in $R_1$ normal form, such that $Dom(\theta) = Var(t)$.
Let $m$ be the maximum degree in $mset(Ran(\theta))$
Then there are substitutions $\zeta$, $\theta_0,\cdots,\theta_m$ 
such that
\begin{enumerate}
    \item $\zeta$ is a zero substitution on $Dom(\theta)$,
    \item $\zeta\theta_0 \cdots \theta_m = \theta$,
    \item $Dom(\theta_i) = Var(t\zeta\theta_0 \cdots \theta_{i-i})$
    \item for all $i$ and all variables $x$ in 
    $Dom(\theta_i)$,
    $x\theta_i = \Sigma T$ for some nonempty $T \subseteq Const($ Ran($\theta)) \cup \{h(x)\}$.
\end{enumerate}
\end{theorem}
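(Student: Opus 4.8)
The plan is to build $\theta$ up one application of $h$ at a time, peeling the constants off the range level by level. First I would split off the variables that $\theta$ kills: let $\zeta$ be the zero substitution with $Dom(\zeta) = \{x \in Var(t) : x\theta = 0\}$. This gives (1) immediately and, more to the point, removes from $t\zeta$ exactly those variables that the $\theta_i$ could never account for, since (4) forbids $x\theta_i = 0$. For every remaining variable $x$ (those with $x\theta \neq 0$) the characterization of $R_1$-normal forms proved above lets us write $x\theta = \Sigma H_x$ with $H_x$ a nonempty set of terms $h^j(c)$, each $c$ a nonzero constant occurring in $Ran(\theta)$. For $i \geq 0$ put $C_x^i = \{\, c : h^i(c) \in H_x \,\}$ (a set of constants, possibly empty) and let $d_x$ be the largest degree occurring in $H_x$, so that $m = \max_x d_x$.

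Next I would define, for $0 \leq i \leq m$, the substitution $\theta_i$ with domain $Dom(\theta_i) = Var(t\zeta\theta_0\cdots\theta_{i-1})$ — so (3) holds by construction — via the rule: if $d_x > i$ set $x\theta_i = \Sigma(C_x^i \cup \{h(x)\})$, and if $d_x = i$ set $x\theta_i = \Sigma C_x^i$. In the first case the set $T = C_x^i \cup \{h(x)\}$ is nonempty because it contains $h(x)$; in the second case $T = C_x^i$ is nonempty because $d_x = i$ forces some summand $h^i(c)$ into $H_x$. In either case $T \subseteq Const(Ran(\theta)) \cup \{h(x)\}$, so (4) holds.

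The heart of the proof is an induction on $i$ establishing the invariant: after applying $\zeta\theta_0\cdots\theta_{i-1}$ (everything read modulo $R_1$), each variable $x$ with $d_x \geq i$ satisfies
\[ x\zeta\theta_0\cdots\theta_{i-1} = \Sigma\{\, h^j(c) \in H_x : j < i \,\} + h^i(x), \]
while each $x$ with $d_x < i$ has already been fully resolved, $x\zeta\theta_0\cdots\theta_{i-1} = x\theta$ with $x$ no longer occurring; consequently $Var(t\zeta\theta_0\cdots\theta_{i-1}) = \{x \in Var(t) : d_x \geq i\}$. The base case $i = 0$ is just $t\zeta$, where $x\zeta = x = h^0(x)$. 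For the inductive step, apply $\theta_i$ to the displayed term: the constant summands are untouched, and $h^i(x)$ becomes $h^i(x\theta_i)$, which after pushing the $h^i$ inward by repeated use of $h(a+b) \to h(a) + h(b)$ becomes $\Sigma\{\, h^i(c) : c \in C_x^i \,\} + h^{i+1}(x)$ when $d_x > i$ and $\Sigma\{\, h^i(c) : c \in C_x^i \,\}$ when $d_x = i$ — exactly the next instance of the invariant, with $x$ disappearing precisely when $d_x = i$. No summands merge or cancel during renormalization, since the $C_x^i$ together with $\{h(x)\}$ partition $H_x$ by degree and $\theta$ was already in normal form; and no new variables are ever introduced, since each $x\theta_i$ mentions only $x$ and constants, so the domain prescribed for $\theta_{i+1}$ is consistent. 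Instantiating the invariant at $i = m+1$: since $m$ is the maximal degree in $mset(Ran(\theta))$ we have $d_x \leq m$ for every $x$, so every variable is resolved, giving $x\zeta\theta_0\cdots\theta_m = x\theta$ for the non-killed variables and $= 0 = x\theta$ for those in $Dom(\zeta)$, i.e.\ (2), while $Var(t\zeta\theta_0\cdots\theta_m) = \emptyset$.

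I expect the only real friction to be the normalization bookkeeping: checking that ``$x\theta$ in $R_1$-normal form'' genuinely has the pure sum-of-$h^j(c)$ shape (this is exactly where the earlier theorem is used), that applying $\theta_i$ and renormalizing neither creates nor cancels summands, and that $Var(\cdot)$ evolves as the invariant claims. Everything beyond that is a mechanical unwinding of the definitions.
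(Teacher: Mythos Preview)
Your proposal is correct and follows essentially the same construction as the paper: define $\zeta$ to kill the variables with $x\theta = 0$, and at each stage $i$ let $\theta_i$ emit the degree-$i$ constants of $x\theta$ together with $h(x)$ when higher-degree summands remain. The paper's proof is much terser---it simply states this construction via sets $S_i^x$ of degree-$i$ summands without carrying out the inductive verification---so your explicit invariant and normalization bookkeeping add rigor but no new idea.
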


\begin{proof}
By the previous theorem, we know that each $x\theta$ is a sum of $h$-terms or is 0.  Then $\zeta$ and $\theta_i$ can be defined as follows, where $S^x = mset(x\theta)$ and ${S_i}^x$ is the set of terms in $S$ with degree $i$:
\begin{itemize}
\item If $x\theta = 0$ then $x\zeta = 0$ else $x\zeta = x$.
\item For all $x \in Dom(\theta_i)$
\begin{itemize}
\item If the maximum degree of $S^x$ is $i$ then $x\theta_i = \Sigma {S_i}^x$.
\item If no terms in $S^x$ have degree $i$ then $x\theta_i = h(x)$.
\item If $S^x$ has terms of degree $i$ and also terms of degree greater than $i$ then $x\theta_i = h(x) + \Sigma {S_i}^x$.
\end{itemize}
\end{itemize}
\end{proof}

In the rest of this section we will be considering the ACUNh asymmetric equation $u \asymeq v$, where $u$ and $v$ are in $R_1$ normal form, and we will build an automaton to represent all the solutions of $u \asymeq v$.  We will need the following definitions.

\begin{definition}
Let $t$ be an object. Define $loseh(t) = \Sigma \{h^i(t) \,\,|\,\, h^{i+1}(t) \in mset(t\downarrow_{R_h})\}$.
\end{definition}

In the next four automata definitions we will use the following notation:
Let $P$ be a set of ACUNh asymmetric equations.
Let $m$ be the maximum degree of terms in $P$.
Let $\Theta$ be the set of all substitutions $\theta$ such that $Dom(\theta) \subseteq Var(P)$ and for all $x \in Dom (\theta)$, $x\theta = \Sigma T$ where $T$ is a nonempty subset of $Const(P) \cup \{h(x)\}$. 
Let $u \asymeq v$ be an ACUNh asymmetric equation.

First we define an automaton to solve the $ACUNh$ asymmetric unification problem with constants.

\begin{definition}
The automaton $M(u \asymeq v,P)$ consists of the quintuple 
$(Q,q_{u \asymeq v},F,$ $\Theta,$ $\delta)$, 
where $Q$ is the set of states, $q_{u \asymeq v}$ is the start state, $F$ is the set of accepting states, $\Theta$ is the alphabet, and $\delta$ is the transition function, defined as follows:
\begin{itemize}
    \item $Q$ is a set of states of the form $q_{s \asymeq t}$, where $s = \Sigma S$ and $t = \Sigma T$, for some $S$ and $T$ subsets of $H_m(P)$.
    \item $F = \{q_{s \asymeq t} \in Q \,\,|\,\, mset(s) = mset(t)\}$
    \item $\delta : Q \times \Theta \longrightarrow Q$ such that $\delta(q_{s \asymeq t}, \theta) = q_{loseh(s\theta)\downarrow_{R_1} \asymeq loseh(t\theta)}$ if 
    $Dom(\theta_i) = Var$ ($s \asymeq t)$,
    $mset((s\theta)\downarrow_{R_1}) \cap H_0(P) = mset(t\theta) \cap H_0(P)$, and
    $mset(t\theta)$ contains no duplicates.
    \ignore{
    \begin{enumerate}
        \item $Dom(\theta_i) = Var(s \asymeq t)$,
        \item $mset((s\theta)\downarrow_{R_1}) \cap H_0(P) = mset(t\theta) \cap H_0(P)$, and
        \item $mset(t\theta)$ contains no duplicates.
    \end{enumerate}
}
\end{itemize}
\end{definition}

Next we define an automaton to solve linear constant restrictions.

\begin{definition}
Let $R$ be a set of linear constant restrictions of the form $(x,c)$.
$M_{LCR}(R,$ $P) = (\{q_0\},$ $q_0, \{q_0\}, \Theta,\delta_{LCR})$ where
$\delta_{LCR}(q_0,\theta) = q_0$ if for all variables $x$ and all $(x,c) \in R$, $c \not\in Const(x\theta)$.
\end{definition}

Next we define an automaton to solve disequalities between a variable and a constant.

\begin{definition}
Let $D$ be a set of disequalities of the form $x \not= c$ where $x$ is a variable and $c$ is a constant.
$M_{VC}(D,P) = (\{q_0,q_1\}, q_0, $ $\{q_0,q_1\}, \Theta,\delta_{VC})$ where $\delta_{VC}(q_0,\theta) = q_1$ if for all variables $x$ and all $x \not= c \in D$,
$x\theta \not= c$.  Also $\delta_{VC}(q_1,\theta) = q_1$.
\end{definition}

Finally we define automata for solving disequalities between variables

\begin{definition}
Let $x$ and $y$ be variables.  Then $M_{VV}(x \not= y,P) = (\{q_0,q_1\}, q_o, \{q_1\}, \Theta,$ $\delta_{x \not= y})$ where $\delta_{x \not= y}(q_0,\theta) = q_0$ if $mset(x\theta) = T \cup \{h(x)\}$ and $mset(y\theta) = T \cup \{h(y)\}$ for some $T$.  Also $\delta_{x \not= y}(q_0,\theta) = q_1$ if $mset(x\theta) \not= mset(y\theta)$ and $mset(x\theta)[x \mapsto y] \not= mset(y\theta)$.


\end{definition}

These are all valid automata. In particular, the first automaton described has a finite number of states, and each transition yields a state in the automaton.
Now we show that these automata can be used to find all asymmetric ACUNh unifiers.

We need a few properties before we show our main theorem, that the constructed automaton finds all solutions.

\begin{appxlem}
\label{novar}
Let $t$ be an object and $\theta$ be a substitution, such that, for all $x \in Var(t)$, $mset(x\theta)$ does not contain a variable. Then $mset(t\theta)$ does not contain a variable.   
\end{appxlem}

\begin{proof}
Consider $s \in mset(t)$.  If $s$ is not a variable then $s\theta$ is not a variable. If $s$ is a variable, then, by our hypothesis, $s\theta$ is not a variable. 
\end{proof}

\begin{appxlem}
\label{failunif}
Let $s \asymeq t$ be an ACUNh asymmetric unification equation in $P$, where $mset(s)$ and $mset(t)$ contain no variables and $mset(s\downarrow_{R_1}) \cap H_0(P) \not= mset(t) \cap H_0(P)$. Then for all substitutions $\sigma$, $s\sigma$ and $t\sigma$ are not unifiable.    
\end{appxlem}

\begin{proof}
$s$ and $t$ are not unifiable, because, wlog,  the multiplicity of some constant in $mset(s\downarrow_{R_1})$ is not in $mset(t\downarrow_{R_1})$. When we apply a substitution, that same constant will appear in $mset((s\sigma)\downarrow_{R_1})$ but not $mset((t\sigma)\downarrow_{R_1})$, since $mset(s)$ and $mset(t)$ contain no variables. So $s\sigma$ and $t\sigma$ are not unifiable.
\end{proof}

\begin{appxlem}
\label{failasym}
Let $t$ be such that $mset(t)$ contains a duplicate. Then $ \forall ~\sigma$, $t\sigma$ is reducible by $R_2$.  
\end{appxlem}

\begin{proof}
We know $t$ is reducible by $R_2$ because $mset(t)$ contains a duplicate.  But then $t\sigma$ also contains a duplicate.  
\end{proof}

\begin{appxlem}
\label{succeed}
Let $s \asymeq t$ be an ACUNh asymmetric unification equation in $P$, such that $mset(s)$ and $mset(t)$ contain no variables. Suppose also that $mset(s\downarrow_{R_1}) \cap H_0(P) = mset(t) \cap H_0(P))$ and $mset(t)$ contains no duplicates. Then $\sigma$ is an ACUNh asymmetric unifier of $s \asymeq t$ if and only if $\sigma$ is an ACUNh asymmetric unifier of 
$loseh(s\downarrow_{R_1}) \asymeq loseh(t)$.
\end{appxlem}

\begin{proof}
Let $s' = loseh(s\downarrow_{R_1})$ and $t' = loseh(t)$. If $mset(s')$ and $mset(t')$ contain no constants, then $s \asymeq t$ and $s' \asymeq t'$ have the same solutions.
Since $mset(s)$ and $mset(t)$ contain no variables, the multiset of constants in $s$ is the same as the multiset of constants in $s\sigma$.  Similarly for $t$ and $t\sigma$. Therefore $s \asymeq t$ has the same solutions as $s' \asymeq t'$.  
\end{proof}

\begin{theorem}
Let $P$ be a set of asymmetric ACUNh equations, such that all terms in $P$ are reduced by $R_1$. Let $\theta$ be a substitution which is reduced by $R_1$. Let $R$ be a set of linear constant restrictions. Let $D$ be a set of variable/constant disequalities. Let $D'$ be a set of variable/variable disequalities.

Then 
$\theta$ is a solution to $P$ if and only if there exists a zero substitution $\zeta$ on $P$ where all right hand sides in $P$ are irreducible, and a sequence of substitutions $\theta_0, \cdots, \theta_m$ such that $\theta \leq \zeta\theta_0 \cdots \theta_m$ and 
\begin{enumerate}
\item
The string $\theta_0 \cdots \theta_m$ is accepted by $M((u\asymeq v)\zeta)\downarrow_{R_1},P'\zeta)$ for all $u \asymeq v \in P$.
\item 
The string $\theta_0 \cdots \theta_m$ is accepted by
$M_{LCR}(R,P'\zeta)$.
\item 
The string $\theta_0 \cdots \theta_m$ is accepted by 
$M_{VC}(D,P'\zeta)$.
\item 
The string $\theta_0 \cdots \theta_m$ is accepted by 
$M_{VV}(x \not= y,P'\zeta)$ for all $x \not= y \in D'$.
\end{enumerate}
where $P' = P \cup \{c \asymeq c\}$ for a fresh constant $c$.
\end{theorem}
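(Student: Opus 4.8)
The overall plan is a simulation argument: both directions follow by running the four automata in lock‑step along a layered decomposition of $\theta$ into letters of $\Theta$, using the substitution‑decomposition theorem above to supply the layers and Lemmas~\ref{novar}--\ref{succeed} to supply the invariant that survives one transition of $M$. For the ``only if'' direction, suppose $\theta$ solves $P$ together with $R$, $D$, $D'$. First pass to $P'$: rename every constant occurring in $Ran(\theta)$ but not in $P$ to the fresh constant $c$; since $c$ is mentioned in no constraint and $c\asymeq c$ is trivial, the renamed substitution still solves everything and now all its constants lie in $Const(P')$. Next, using that $\theta$ is $R_1$‑reduced so (by the theorem on the structure of $R_1$‑normal forms) each $x\theta$ is a sum of terms $h^i(\text{constant})$, apply the decomposition theorem to the tuple $t$ of all equations of $P$, truncated so that only layers $0,\dots,m$ (with $m$ the maximum degree in $P$) are produced and any higher‑degree or non‑$P$ content is absorbed into $h(x)$‑occurrences; this yields the zero substitution $\zeta$ and $\theta_0,\dots,\theta_m\in\Theta$ with $\theta\le\sigma:=\zeta\theta_0\cdots\theta_m$. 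One checks routinely that each $\theta_i$ is a genuine letter of $\Theta$ (its images are $\Sigma S^x_i$, $\Sigma S^x_i+h(x)$, or $h(x)$, with $S^x_i\subseteq Const(Ran(\theta))\subseteq Const(P'\zeta)$), that $Dom(\theta_i)=Var(t\zeta\theta_0\cdots\theta_{i-1})$ as the transition functions require, and that all right‑hand sides of $P\zeta$ remain irreducible, since a zero substitution can only delete summands from a reduced sum of atoms.

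The core is that $\theta_0\cdots\theta_m$ is accepted by $M((u\asymeq v)\zeta\!\downarrow_{R_1},P'\zeta)$ for every $u\asymeq v\in P$, proved by induction on the prefix length. The invariant after reading $\theta_0\cdots\theta_{i-1}$ is: the automaton is in a state $q_{s_i\asymeq t_i}$, and $\theta_i\cdots\theta_m$ composed with the tail witnessing $\theta\le\sigma$ is an asymmetric ACUNh solution of $s_i\asymeq t_i$. At step $i$ the three side conditions in $\delta$ must hold, for otherwise $s_i\asymeq t_i$ would be unsolvable: by Lemma~\ref{novar} the relevant parts of $s_i,t_i$ carry no variables, so a mismatch of the degree‑$0$ constant multisets contradicts solvability by Lemma~\ref{failunif}, and a duplicate in $mset(t_i\theta_i)$ makes the right‑hand side reducible by Lemma~\ref{failasym}; both contradict the induction hypothesis. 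Given the side conditions, Lemma~\ref{succeed} says $s_i\asymeq t_i$ and $loseh(s_i\theta_i)\!\downarrow_{R_1}\asymeq loseh(t_i\theta_i)$ have the same solutions, so composing with $\theta_i$ transports the witness to $q_{s_{i+1}\asymeq t_{i+1}}$. Since each $loseh$ strips one layer of $h$ and the decomposition stops carrying $h(x)$ forward once the residual degrees are exhausted, after all $m+1$ letters the residual equation is solved by a substitution that no longer touches its now‑matched content, forcing $mset(s_{m+1})=mset(t_{m+1})$, i.e.\ an accepting state. Acceptance by the other three automata is then easy: $Const(x\theta_i)\subseteq Const(x\theta)$ handles $M_{LCR}$; $x\theta$ being a bare constant forces $x\theta_0$ to equal it, so the first letter suffices for $M_{VC}$; and for $M_{VV}$ the run stays in $q_0$ exactly while $x,y$ receive identical layers modulo $h(x)\leftrightarrow h(y)$ and moves to the accepting $q_1$ at the first layer where $x\theta$ and $y\theta$ genuinely differ, which exists and survives the $[x\mapsto y]$ test because $\theta$ is reduced and satisfies $x\neq y$.

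For the converse, given $\zeta$ and an accepted string $\theta_0\cdots\theta_m$ with $\theta\le\sigma:=\zeta\theta_0\cdots\theta_m$, run the same induction backwards: the accepting final state gives $mset(s_{m+1})=mset(t_{m+1})$, a trivial solution of the bottom residual equation, and the ``if'' half of Lemma~\ref{succeed} lifts it layer by layer to show $\sigma$ solves every $u\asymeq v$ asymmetrically, while the accepting runs of $M_{LCR}$, $M_{VC}$, $M_{VV}$ certify that $\sigma$ meets $R$, $D$, $D'$. Finally, since $\theta$ is $R_1$‑reduced and $\theta=\sigma\rho$, and the side conditions built into the transitions have already excluded the only ways an instantiation could make a right‑hand side reducible or collapse a disequality, $\theta$ inherits solutionhood from $\sigma$.

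I expect the main obstacle to be the degree/layer bookkeeping inside the $M$‑induction — pinning down precisely what $s_i,t_i$ look like after $i$ rounds of ``substitute a layer, apply $loseh$, renormalize'', and in particular proving that exactly $m+1$ layers suffice to reach an accepting state (equivalently, that the bounded‑height letters of $\Theta$ are expressive enough) — together with making the ``$\theta\le\sigma$'' slack and the single fresh constant $c$ interact correctly so that solutionhood genuinely transfers in both directions, especially for the right‑hand‑side irreducibility condition of asymmetric unification.
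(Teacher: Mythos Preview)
Your inductive invariant along the run of $M$ and your use of Lemmas~\ref{novar}--\ref{succeed} to discharge the transition side-conditions match the paper's argument closely. The gap is in the layer count. You take $m$ to be the maximum degree appearing in $P$ and truncate the decomposition of $\theta$ to $m{+}1$ letters, absorbing higher-degree content into $h(x)$-placeholders. That truncated string is in general \emph{not} accepted: for $h(x)+b\asymeq x+y$ (where the maximum degree in $P$ is $1$) and the ground solution $x\mapsto h(b)+b,\ y\mapsto h^2(b)$, the two-letter run you would produce ends in the non-accepting state $q_{b\asymeq y}$. The $m$ in the automaton definition bounds the height of terms labelling a \emph{state}; it does not bound the length of an accepting run. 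The paper instead applies the decomposition theorem as stated, with $m$ equal to the maximum degree in $mset(Ran(\theta))$, so that for ground $\theta$ one obtains $\zeta\theta_0\cdots\theta_m=\theta$ exactly and the run has enough letters to reach an accepting state. Your closing paragraph correctly flags ``exactly $m{+}1$ layers suffice'' as the crux; with your choice of $m$ the claim is simply false.

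The second gap is the non-ground case. Renaming foreign constants in $Ran(\theta)$ to $c$ does nothing about \emph{variables} in $Ran(\theta)$, so the assertion that ``each $x\theta$ is a sum of terms $h^i(\text{constant})$'' is unjustified when $\theta$ is not ground, and the decomposition theorem (which is stated for ground $\theta$) does not apply. The paper handles this differently: it first proves the equivalence for ground $\theta$ via the exact decomposition, and then reduces the non-ground case to the ground one by invoking a separately established fact that every non-ground solution is a generalization of some ground solution using one additional fresh constant. That is precisely what the constant $c$ in $P'$ and the inequality $\theta\le\zeta\theta_0\cdots\theta_m$ are for. In particular, the accepted composition is meant to be a ground \emph{instance} of $\theta$ (obtained by sending its free variables to $c$), whereas your truncation produces a composition that is \emph{more general} than $\theta$; so both your reading of the inequality $\theta\le\sigma$ as $\theta=\sigma\rho$ and your account of the role of $c$ are inverted relative to the paper's argument.
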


\begin{proof}
First we show that Item 1 holds for a ground substitution $\theta$ reduced by $R_1$.
By the previous theorem, $\theta$ can be represented as $\zeta\theta_0 \cdots \theta_m$.  

We show by induction that, for all $i$, if $\theta = \zeta\theta_0 \cdots \theta_i$ and $\delta(q_{(u \asymeq v)\zeta},\theta_0 \cdots \theta_i) = q_{s \asymeq t}$
then $\theta\sigma$ is an asymmetric ACUNh unifier of $u \asymeq v$ if and only if $\sigma$ is an asymmetric ACUNh unifier of $s \asymeq t$.  In the base case, $\theta = \zeta$ and $(s \asymeq t) = (u \asymeq v)\zeta$, so it is true.  

For the inductive step, we assume the statement is true for $i$ and prove it for $i+1$.  
Then let $\sigma'$ be an arbitrary substitution, and instantiate $\sigma$  $\theta\sigma'$ in the inductive assumption, where $\theta = \zeta\theta_0 \cdots \theta_i$.
Our assumption implies that $\theta\theta_{i+1}\sigma$ is an asymmetric ACUNh unifier of $u \asymeq v$ if and only if $\theta_i\sigma$ is an asymmetric ACUNh unifier of $s \asymeq t$
(i.e., $\sigma$ is an asymmetric ACUNh unifier of $(s \asymeq t)\theta_{i+1}$).
If we can now show that $\sigma$ is an asymmetric ACUNh unifier of $(s \asymeq t)\theta_{i+1}$ if and only if $\sigma$ is an ACUNh unifier of $loseh(s\theta_{i+1}) \asymeq loseh(t\theta_{i+1})$. and $mset(s\theta_{i+1} \cap H_0(P') = mset(t\theta_{i+1}) \cap H_0(P')$ and $mset(t\theta_{i+1})$ contains no duplicates, then we are done. 
By Lemma~\ref{novar}, we know that $mset(s \asymeq t)\theta_{i+1}$ contains no variables.  Then we apply Lemma~\ref{succeed} to prove the induction step.


This proves our inductive statement.  
If $\theta$ is not an asymmetric ACUNh unifier of $u \asymeq v$, then Lemmas~\ref{failunif} and \ref{failasym} imply that the transition function will not be applicable at some point.
Our inductive statement shows that 
$\theta$ is an asymmetric ACUNh unifier of $u \asymeq v$ if and only if there is a final state with $id$ as an asymmetric ACUNh unifier, which will be an accepting state. 

This concludes the case for a ground substitution $\theta$.  It $\theta$ is not ground, then the fact that $P'$ contains a fresh constant $c$ means that we create substitutions with an additional constant.  We have already shown in this paper that nonground solutions are generalizations of solutions involving one additional constant. 

It is straightforward to see that the other automata only accept valid solutions of linear constant restrictions and disequations.
\end{proof}

If desired, we could intersect all the automata, yielding an automaton representing all the solutions of the problem (think of the results after applying $\zeta$ as a set of initial states).  This shows that the set of solutions can be represented by a regular language, with or without $LCRs$ 
 and disequalities.  If we only want to decide asymmetric unification, we just check if there is an accepting state reachable from an initial state.  We could enumerate all the solutions by finding all accepting states reachable in 1 step, 2 steps, etc.  If there is a cycle on a path to an accepting state, then there are an infinite number of solutions, otherwise there are only a finite number of solutions.
This will find all the ground substitutions. 
To find all solutions, we generalize the solutions we find and check them. Indeed, the only terms that need to be generalized are those containing $c$.  This is decidable because there are only a finite number of generalizations.

In Figure~(\ref{fig:CE}), we show the automaton created for the problem $h(x) + b \asymeq x + y$, without linear constant restrictions and disequality constraints. In this example, the only zero substitution that works is the identity.  Notice that $c$ never appears in the domain of a substitution, because no such substitution satisfies the conditions for the transition function.  This leads to the following theorem.

\begin{theorem}\label{Thm:not_finitary}
Asymmetric ACUNh unification with constants is not finitary.
\end{theorem}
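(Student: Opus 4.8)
The plan is to exhibit a single ACUNh asymmetric unification problem with constants whose minimal complete set of unifiers is infinite, and the natural candidate — foreshadowed in the paragraph preceding the statement — is $h(x) + b \asymeq x + y$. So the proof is essentially a concrete construction plus a minimality argument, leaning on the automaton machinery already built. First I would write down the problem $P = \{h(x)+b \asymeq x + y\}$ and, using the characterization theorem for $R_1$-normal substitutions and the transition function of $M(u\asymeq v, P)$, trace the automaton as in Figure~\ref{fig:CE}: the identity is the only admissible zero substitution, $c$ never enters the domain of any $\theta_i$, and there is a cycle on a path to an accepting state. That cycle is what produces infinitely many solutions; by the remark right after the main theorem, a cycle on a path to an accepting state means there are infinitely many ground solutions.

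Next I would extract from the cycle an explicit infinite family of unifiers $\sigma_n$ (for $n \geq 0$) and verify each is an asymmetric ACUNh unifier of $h(x)+b \asymeq x+y$ — i.e. that $(h(x)+b)\sigma_n$ and $(x+y)\sigma_n$ are equal modulo ACUNh with the right-hand side in $R_1$-normal form. Concretely I expect $\sigma_n$ to send $x \mapsto b + h(b) + \cdots + h^{n}(b)$ (a "telescoping" sum) and $y$ to the appropriate complementary sum so that the $h$ applied to $x$ shifts the degrees up by one and the lone $+b$ closes the loop; this is exactly the effect of going around the cycle $n$ times. The verification here is the routine ACUNh computation $h(b + h(b) + \dots + h^n(b)) + b = (h(b) + \dots + h^{n+1}(b)) + b$, and one then reads off $y\sigma_n$.

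Then comes the heart of the argument: showing no finite subset of $\{\sigma_n\}$ (together with any finitely many other unifiers) is complete, i.e. that the $\sigma_n$ are pairwise incomparable under the asymmetric instantiation quasi-order $\leq$ (with the proviso that the right-hand side stays irreducible). The key observation is that $\sigma_n$ makes $x$ a sum of exactly $n+1$ distinct ground $h$-terms over $b$ with degrees $0,1,\dots,n$, and any substitution $\tau$ with $\sigma_n \leq \tau$ must instantiate these into terms of strictly larger degree or more summands, so $\sigma_m$ cannot be an instance of $\sigma_n$ for $m \neq n$; a degree/cardinality invariant of $mset(x\sigma_n{\downarrow})$ separates them. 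I would phrase this as: any unifier $\tau$ is $\leq$-above only finitely many of the $\sigma_n$, hence no finite complete set exists. The main obstacle is precisely getting this incomparability argument airtight — one must be careful that asymmetric unification forbids reducing the right-hand side, which constrains which instantiations are legal and is what actually prevents a single "most general" unifier from covering the whole family; I would handle this by tracking the $R_1$-normal form of the right-hand side under instantiation and invoking Lemma~\ref{failasym}-style duplicate-forbidding reasoning to rule out collapses. Finally I would note that the same problem witnesses non-finitariness even without LCRs or disequalities, since none were used, completing the proof.
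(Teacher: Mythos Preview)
Your approach is essentially the paper's: same witness problem $h(x)+b \asymeq x+y$, same automaton observation (identity is the only admissible $\zeta$, $c$ never appears, cycle on a path to an accepting state), hence infinitely many solutions. Where you diverge is in what you call ``the heart of the argument.'' You propose to extract the explicit family $\sigma_n$ and prove pairwise incomparability via a degree/cardinality invariant on $mset(x\sigma_n)$, worrying about right-hand-side irreducibility and Lemma~\ref{failasym}-style collapse prevention. The paper instead uses a one-line observation you already have in hand but do not exploit: since $c$ never enters any solution, every solution produced is \emph{ground} (over $\{b\}$), and distinct ground substitutions are automatically incomparable in the instantiation quasi-order --- no ground substitution has any proper instance. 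So the infinite set is immediately minimal and complete, with no need for the explicit $\sigma_n$, the telescoping verification, or the incomparability invariant. Your route is correct but does more work than necessary; the paper's shortcut is worth noting because it turns your ``main obstacle'' into a triviality.
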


\begin{proof}
The automaton constructed for $h(x) + b \asymeq x + y$ has a cycle on a path to an accepting state. Therefore there are an infinite number of solutions.  Since there is no $c$ in the range of the solution, all the solutions are ground.  So no solution can be more general than another one, which means this infinite set of solutions is also a minimal complete set of solutions.
\end{proof}

\section{Combining Automata based Asymmetric Algorithms with the Free Theory}~\label{Sec:Combination}
In order to obtain a general asymmetric ACUNh-unification decision
procedure we need to add free function symbols. We can do this by
using disjoint combination. The problem of asymmetric unification in
the combination of disjoint theories was studied
in~\cite{DBLP:conf/fossacs/ErbaturKMMNR14} where an algorithm is
developed for the problem. However, the algorithm
of~\cite{DBLP:conf/fossacs/ErbaturKMMNR14} does not immediately apply
to the two methods developed in this paper. This is due to the nature
of the two automata based approaches. More formally, let $\Delta_1$
and $\Delta_2$ denote two equational theories with disjoint signatures
$\Sigma_1$ and $\Sigma_2$. Let $\Delta$ be the combination, $\Delta =
\Delta_1 \cup \Delta_2$, of the two theories having signature
$\Sigma_1 \cup \Sigma_2$.  The algorithm
of~\cite{DBLP:conf/fossacs/ErbaturKMMNR14} solves the asymmetric
$\Delta$-unification problem. It assumes that there exists a finitary
complete asymmetric $\Delta_i$-unification algorithm with linear
constant restrictions, $A_i$. Based on this assumption the algorithm
is able to check solutions produced by the $A_1$ and $A_2$ algorithms
for theory-preserving and injective properties, discarding those that
are not. A substitution $\sigma_i$  is \emph{injective} modulo $\Delta_i$ if $x \sigma_i =_{\Delta_i} y \sigma_i$ iff $x = y$,
and $\sigma_i$ is \emph{theory preserving} if for any variable $x$ of index $i$, $x \sigma_i$ is not  a variable of index $j \neq i$.
For the automaton it is not always possible to check
solutions, however, it is possible to build constraints
into the automaton that enforce these conditions.  Algorithm~\ref{Alg:comb} is a modification
of the algorithm from~\cite{DBLP:conf/fossacs/ErbaturKMMNR14} with the
following properties:
\begin{itemize}
	\item $\Delta_1 = ACUNh$ and $\Delta_2= F_{\Omega}$, for
	some free theory $F_{\Omega}$ with symbols $\Omega$.
	\item For each $\Delta_1$-pure problem, partition, and theory index, an
	automaton is constructed enforcing the injective and theory preserving 
	restrictions. Since these restrictions are built into the automata, the only $\Delta_1$ solutions produced will be both theory preserving and injective. 
	\item The solution produced by the $F_{\Omega}$ algorithm is checked as in the original algorithm. If the solution is found not to be injective or theory preserving it is discarded.
\end{itemize}
The new modified version is presented in Algorithm~\ref{Alg:comb} (included in the appendix due to space). 
Given the decision 
procedure of Section~\ref{Sec:Decpro} we obtain the following.

\begin{theorem}
Assume there exists an asymmetric $ACUNh$  decision procedure 
that enforces linear constant restrictions, theory indexes, and injectivity. Then 
Algorithm~\ref{Alg:comb} is a general asymmetric $ACUNh$  decision procedure.
\end{theorem}
\begin{proof}
	The result follows directly from the proof contained in~\cite{DBLP:conf/fossacs/ErbaturKMMNR14}. There it is shown
	that Algorithm~\ref{Alg:comb} is both sound and complete.
	The only modification is that in~\cite{DBLP:conf/fossacs/ErbaturKMMNR14} the combination
	algorithm checks the $\Delta_1$ solutions for the properties
	of being injective and theory preserving, while in 
	Algorithm~\ref{Alg:comb} it is assumed that the algorithm $A_1$ itself will enforce these restrictions. 
\end{proof}

If instead of a decision procedure we want to obtain a general 
asymmetric $ACUNh$ unification algorithm we can 
use the automata based algorithm from Section~\ref{Sec:Subpro}
and again a modification of the asymmetric combination algorithm 
of~\cite{DBLP:conf/fossacs/ErbaturKMMNR14}. Here, the
modification to the combination algorithm is even smaller. We just remove the 
check on injective and theory preserving substitutions. Again these restrictions are enforced by the automata. 
The solutions to the $ACUNh$ and the free theory are combined as is done in~\cite{DBLP:conf/fossacs/ErbaturKMMNR14} since they obey the same linear constant restrictions. Since asymmetric $ACUNh$ unification with constants is not finitary (Theorem~\ref{Thm:not_finitary}), the
general asymmetric $ACUNh$ unification algorithm will not 
in general produce a finite set of solutions. However, 
based on the algorithm of Section~\ref{Sec:Subpro} we easily
obtain the following result.
    
\begin{theorem}
	Assume there exists an asymmetric $ACUNh$  algorithm that enforces linear constant restrictions, theory indexes and injectivity, and produces a complete set of unifiers. Then there exists a general asymmetric $ACUNh$  
	algorithm producing a complete set of unifiers.
\end{theorem}

\section{Conclusion}\label{sec:conclusion}

We have provided a decision procedure and an  algorithm for asymmetric unification  modulo $ACUNh$ using a decomposition $R \uplus ACh$. This is the first example of an asymmetric unification algorithm for a theory in which unification modulo the set $\Delta$ of axioms is undecidable.  It also has some practical advantages: it is possible to tell by inspection of the automaton used to construct unifiers whether or not a problem has a finitary solution. Moreover, the construction of the automaton gives us a natural way of enumerating solutions; simply traverse one of the loops one more time to get the next unifier. 

There are a number of ways in which we could extend this work.  For example, the logical next step is to consider the decidability of asymmetric unification of $AGh$ with a $\Delta =  ACh$.   If the methods we used for $ACUNh$ extend to $AGh$, then we have an asymmetric unification algorithm for $AGh$, although with $\Delta = ACh$ instead of $AC$.  On the other hand, if we can prove   undecidability of asymmetric unification for $AGh$ with $\Delta = ACh$  as well as with $\Delta = AC$, this could give us new understanding of the problem that might allow us to obtain more general results.  Either way, we expect the results to give increased understanding of asymmetric unification when homomorphic encryption is involved.

\bibliography{citations}

\appendix

\section{Proof of $R_2, ACh^{}$ convergence}
\label{appendix1}
\subsection{$R_2^{}$ is $ACh$-convergent}
To begin with let us show that the theory,  $\rightarrow_{R_2^{} / ACh}^{}$, is terminating.
\begin{appxlem}
  $\rightarrow_{R_2^{} / ACh}^{}$ is terminating.
\end{appxlem}

\begin{proof}
  We use a polynomial interpretation. The signature of $R_2$
  is~$\Sigma \; = \; \{+, h, 0\}$. We take the polynomial
  interpretation $P_h \; = \; 2 * X, \; P_0 \; = \; 1, \; P_+ \; = \;
  X+Y$. It is not hard to see that the rewrite rules are decreasing
  and the identities are preserved under this interpretation. Hence
  $\rightarrow_{R_2, ACh}^{}$~is terminating.
\end{proof}

\noindent
The identities in $ACh$ can be further decomposed,
retaining {\textit{associativity}} and {\textit{commutativity}} as identities ${\Delta}^{}$ and
viewing the other identity as a rewrite rule $R_h^{}$.

\noindent
Thus $R_h^{}$ is the term rewriting system: 
$h(x+y)  \rightarrow  h(x) + h(y)$
with the identities~$AC$:
  $(x + y) + z  \approx  x + (y + z), \;\;\;
  x + y  \approx  y + x$
Note that $R_1^{} ~ = ~ R_2^{} \cup R_h^{}$. Thus
$\longrightarrow_{R_1^{},AC}^{} \; = \; \longrightarrow_{R_2^{},AC}^{} \; \cup \; \longrightarrow_{R_h^{},AC}^{}$.
All these term rewriting
systems, $R_1^{}$, $R_2^{}$ and $R_h^{}$, are $AC$-convergent.

\noindent
We define some terms which will be used later:
\begin{definition}
  A term is a $\bf{+}$-term if and only if it has only variables and constants, and no occurrences of~$h$.
\end{definition}

\begin{definition}
  A term is a \emph{pure} $\bf{+}$-term if and only if it is a $\bf{+}$-term with no constants
  (i.e., it belongs to $T(\{ + \}, V)$).
\end{definition}

\noindent
In order to prove that $R_2^{}$ is $ACh$-convergent we first prove the
following lemmas:

\begin{appxlem}
If $s$ and $t$ are irreducible by~$\rightarrow_{R_h, AC}^{}$,
then so is $s+t$.
\end{appxlem}
\begin{proof}
The only rule in $R_h^{}$ has $h$ as the root of its left-hand
side. Hence for the term $s+t$ there is no reduction possible.
\end{proof}
\vspace{0.1in}

\begin{corol} \label{normalform}
$(s + t)\!\big\downarrow_{R_h^{}, AC} ~ = ~ s\!\big\downarrow_{R_h^{}, AC} \; + \; t\!\big\downarrow_{R_h^{}, AC}$
\end{corol}

\vspace{0.1in}
\begin{appxlem}\label{lem3}
If s is a $\bf{+}$-term, then $s \; \lesssim^{}_{ACh} \; t$
if and only if $s \; \lesssim_{AC}^{} \; t\!\big\downarrow_{R_h^{},AC}$.
\end{appxlem}
\begin{proof}
Since $s$ is a $\bf{+}$-term, $s$ is of the form $Y_1^{} + \ldots + Y_n^{}+c_1 + \ldots + c_m$,
where the $Y_i^{}$'s need not be distinct. Let $t^{\prime}$ be
the normal form of~$t$ modulo~$R_h^{}$, i.e., $t^{\prime} = t\!\big\downarrow_{R_h^{}, AC}^{}$.\\

\noindent
``if'' part: If $s$ $AC$ matches with $t^{\prime}$, then $s$ is $ACh$ matchable with $t$. \\

Let $\theta$ be a matching substitution such that $s\theta \;
{\approx}_{AC} \; t_{}^\prime$.  The substitution $\theta$ also $ACh$
matches $s$ with~$t$, i.e., $s\theta \; \approx_{ACh} \; t$, because 
$AC$ is a subset of $ACh$.

\noindent
``only if'' part: If $s$ $ACh$ matches with $t$, then $s\;AC$ matches with $t_{}^\prime$. \\

Suppose $s\sigma \; {\approx}_{ACh} \; t$, where $\sigma$ is substitution.
Let $\sigma_{}^\prime$ be a substitution defined as $z\sigma_{}^\prime ~ = ~
z\sigma\!\big\downarrow_{R_h^{}, AC}^{}$ for all $z \in Dom(\sigma)$.
We now show that $s\sigma_{}^\prime\; \approx_{AC} \; t^{\prime}$.

We have $s\sigma \; {\approx}_{ACh} \; t$ and by Corollary~\ref{normalform},
$s\sigma^{\prime} \; = \; s\sigma\!\big\downarrow_{R_h^{}, AC}^{}$, i.e., 
\mbox{$s\sigma \rightarrow_{R_h, AC}^{!}s\sigma^{\prime}$}. Thus
$s\sigma^{\prime} \; \approx_{AC}^{} \; t^{\prime}$.
\end{proof}

\begin{corol} \label{R2ACh}
A term $s$ is reducible at the root (i.e., at position~$\epsilon$) by
$\; \longrightarrow_{R_2^{},ACh}^{} \; $
if and only if 
$s\!\big\downarrow_{R_h^{}, AC}^{}$ is reducible at the root 
by $\; \longrightarrow_{R_2^{},AC}^{}$.
\end{corol}

\begin{appxlem}
If $s$ is a pure $\bf{+}$-term, then $s \; \lesssim^{}_{ACh} \; h(t)$
if and only if $s \; \lesssim_{ACh}^{} \; t$.
\end{appxlem}
\begin{proof}
For the ``if'' part, let
$\theta$ be the matching substitution. Then the substitution
$\theta_{}^\prime$, defined as $ y\theta_{}^\prime = h( y\theta )$ $\text{ for all } y$ matches
$s$ with $h(t)$.

For the ``only if'' part, suppose $\beta$ is a substitution that
$ACh$-matches $s$ with~$h(t)$. We can assume without loss of
generality that $t$ is a ground term. Since $R_h^{}$ is
$AC$-convergent, we can also assume that the normal form of~$h(t)$,
modulo~$\rightarrow_{R_h^{}, \, AC}^{}$, is of the form~$h_{}^{i_1}
(a_1) + \ldots ~ + h_{}^{i_n} (a_n)$ where the $a_i^{}$'s are constants
and each exponent~$i_j^{}$ is a positive integer greater than zero.
Therefore for each variable~$X_k^{}$, $X_k^{}\beta ~ \approx_{ACh}^{} ~ h(t_k^{})$
for some ground term~$t_k^{}$. Thus $\beta_{}^\prime$,
defined as $ y\beta _{}^\prime = s_{}^\prime ~ ~ \text{ if } ~ ~ y\beta~ \approx_{ACh}^{} ~ h( s_{}^\prime )$ will
match $s$ with~$t$.
\end{proof}

\begin{appxlem}\label{lem5}
  If $s$ is a $\bf{+}$-term and $s \; \lesssim^{}_{ACh} \; h(t)$
  then $s \; \lesssim_{ACh}^{} \; t$.
\end{appxlem}

\begin{proof}
  Since $s$ is a $\bf{+}$-term, $s$ must be of the form
  $X_1^{} + \ldots + X_m^{}+b_1 + \ldots + b_n$ where some of the $X_i^{}$'s can be equal
  (i.e., variables can be repeated). However, clearly $s$~cannot have constants
  and thus must be a pure~$\bf{+}$-term.
  The rest of the proof follows from the ``only if'' part
  of the previous lemma.
\end{proof}

\begin{appxlem} \label{redexes}
  For all $t, \, s, \, t_{}^{\prime\prime}$: if $t$
  is reducible at the root (i.e., at position~$\epsilon$) by
  $\; \longrightarrow_{R_2^{},ACh}^{} \; $ 
  and $t_{}^{\prime\prime} \; \approx_{AC}^{} \; t \;+\;s$
  then $t_{}^{\prime\prime}$ is also reducible 
  by $\; \longrightarrow_{R_2^{},ACh}^{}$.
\end{appxlem}
\begin{proof}
  Suppose $t\;\approx_{ACh}\;l\sigma$ where $\sigma$ is a substitution
  and $l\rightarrow r\in R_2^{}$. The case where $l = h(0)$ is trivial.

  Suppose $l$ is the left-hand side of one of the other rules. Then $l$~is
  a $+$-term. Let $t_1=t\!\big\downarrow_{R_h^{}, AC}^{}$. By the proof of
  Lemma~\ref{lem3}, $t_1\; \approx_{AC}^{} \;l\sigma_{}^{\prime}$, where
  $\sigma_{}^{\prime}$ is defined as $ y\sigma_{}^\prime  = 
y\sigma\!\big\downarrow_{R_h^{}, AC}^{} $ for all $y \in Dom(\sigma)$.
From this we have $l\sigma^{\prime} \; = \; l\sigma\!\big\downarrow_{R_h^{}, AC}^{}$, i.e., 
\mbox{$l\sigma \rightarrow_{R_h, AC}^{!}l\sigma^{\prime}$}.

  Let $t_{}^{\prime\prime}\; \approx_{AC}^{} \;t\;+\;s$. By Corollary~\ref{normalform},
  $t\;+\;s\;\longrightarrow^{!}_{R_h^{},AC}\;t_1 + \widehat{s} \,$ where
  $\widehat{s} \, = \, s\!\big\downarrow_{R_h^{},AC}^{}$. Since $t_1$ is reducible
  at the root by~$\longrightarrow_{R_2^{},AC}^{}$, so is
  $t_1 + \widehat{s}$, as can be seen from an examination of the rules.
  Thus $ t_{}^{\prime\prime} \; \longrightarrow^{!}_{R_h^{},AC}\;
  t_1 + \widehat{s} \; ~  \; \approx_{AC}^{} \; ~ l_{}^{\prime}\beta $ for some rule~$( l_{}^{\prime}
  \rightarrow r_{}^{\prime} )$~in~$R_2^{}$.
  This implies that $\, t_{}^{\prime\prime} \;   ~  \; \approx_{ACh}^{} \; ~ l_{}^{\prime}\beta $.
\end{proof}

\begin{appxlem} \label{h-terms}
For all $t$, if $h(t)$ is reducible by~$\;
\longrightarrow_{R_2^{},ACh}^{} \;$ then $t = 0$
or $t$ is reducible by~$\;
\longrightarrow_{R_2^{},ACh}^{}$.
\end{appxlem}
\begin{proof}
  Suppose $t \neq 0$ and that $t$ is
  not reducible by $\; \longrightarrow_{R_2^{},ACh}^{}$. Then 
  there must be some left-hand side $l$~in~$R_2^{}$ such that
  $l \; \lesssim^{}_{ACh} \; h(t)$. Since $t \neq 0$, $l$ cannot be~$h(0)$, so $l$ must be a $\bf{+}$-term. 
  If  $l \; \lesssim^{}_{ACh} \; h(t)$, by Lemma~\ref{lem5} we know that $l \; \lesssim^{}_{ACh} \; t$, which is a contradiction. 
\end{proof}


\begin{appxlem}
For all $s, s_{}^{\prime}$, if $s$ is reducible by $\;
\longrightarrow_{R_2^{},ACh}^{} \;$ and $s \, \approx_{AC}^{} \, s_{}^{\prime}$,
then $s_{}^{\prime}$ is reducible by~$\;
\longrightarrow_{R_2^{},ACh}^{}$.
\end{appxlem}

\begin{proof}
  Assume the contrary. Let $t$ be the \emph{smallest} term such that
  $t$~is reducible by $\; \longrightarrow_{R_2^{},ACh}^{} \;$ and
  there is another term $t_{}^{\prime}$, \emph{AC}-equivalent to~$t$,
  such that $t_{}^{\prime}$ is irreducible by $\;
  \longrightarrow_{R_2^{},ACh}^{}$.  Clearly $t$ cannot be reducible
  at the root. Let $p \neq \epsilon$ be an outermost position in~$t$
  such that $t|_p^{}$ is a redex. Let $p = p_{}^{\prime} \cdot i$. 
  Obviously, $t|_p^{}$ cannot
  be~$h(0)$.  The parent of node~$p$ cannot have a $+$-symbol because
  then that subterm, i.e., $t|_{p_{}^{\prime}}^{}$, will be a redex too, by the proof of
  Lemma~\ref{redexes}. Thus it has to be an~$h$: in other words,
  $h( t|_p^{} )$ is a subterm of~$t$. Since $t_{}^{\prime} \, \approx_{AC}^{} \, t$,
  $t_{}^{\prime}$ should have a subterm~$h( t_{1}^{\prime} )$ where
  $t_{1}^{\prime} \, \approx_{AC}^{} \, t|_p^{}$. If $t_{1}^{\prime}$ is
  reducible then so is $t_{}^{\prime}$; if not, then $t_{}^{\prime}$
  is a smaller counterexample.
\end{proof}

\begin{corol} \label{plus-terms}
For all $t, \, t_{}^\prime, \, t_{}^{\prime\prime}$: if $t$ is reducible by $\;
\longrightarrow_{R_2^{},ACh}^{} \;$ and $t_{}^{\prime\prime} \; \approx_{AC} \; t \;+\;t_{}^{\prime}$
then $t_{}^{\prime\prime}$ is also reducible by $\;
\longrightarrow_{R_2^{},ACh}^{}$.
\end{corol}

\begin{appxlem} \label{commute}
For all $s_1, \, s_2, \, s_3$: if $s_1$ $\;
\longrightarrow_{R_h^{},AC}^{} \;$ $s_2$, and $s_2$ $\;
\longrightarrow_{R_2^{},ACh}^{} \;$ $s_3$, then $s_1$ is also reducible
by~$\longrightarrow_{R_2^{},ACh}^{}$.
\end{appxlem}
\begin{proof}
Suppose $s_1$ reduces to $s_2$ at some position $p$ and $s_2$ reduces
to $s_3$ at some position $q$, i.e.,
$s_1 \; \longrightarrow_{R_h^{},AC}^{p} \; s_2$ and
$s_2 \; \longrightarrow_{R_2^{},ACh}^{q} \; s_3$.
Thus there is a substitution $\theta$ such that ${s_1}|_p^{} \; \approx_{AC}^{} \;
(h(x + y))\theta $ and $s_2 \; = \; s_1 [ h(x\theta) + h(y\theta) ]_p^{}$.

$\emph{Two cases}$ have to be considered:

$p < q:$ Either $h(x\theta)$ or $h(y\theta)$ is reducible
by~$\longrightarrow_{R_2^{},ACh}^{}$.
Assume, without loss of generality, that
$h(x\theta)$ is reducible
by~$\longrightarrow_{R_2^{},ACh}^{}$.
By Lemma~\ref{h-terms}, $x\theta$ is either 0 or reducible by~$\longrightarrow_{R_2^{},ACh}^{}$.
If $x\theta$ is reducible by~$\longrightarrow_{R_2^{},ACh}^{}$, then $x\theta+y\theta$ is
also reducible by~$\longrightarrow_{R_2^{},ACh}^{}$, by~Corollary~\ref{plus-terms}.
If $x\theta = 0$, then clearly $x\theta+y\theta \; = \; 0 + y\theta$ is
reducible by~$\longrightarrow_{R_2^{},ACh}^{}$. \\

\ignore{Then $h(x\theta) \; \approx_{ACh}^{} \; l\beta$
for some~$l \rightarrow r$~in~$R_2^{}$. \\[-10pt]
Then either $x\theta = 0$
(i.e., the reduction uses the rule~$h(0) \rightarrow 0$) or
$x\theta$ is reducibleby~$\longrightarrow_{R_2^{},ACh}^{}$.
}

$q \lesssim p:$ Let $p = q \cdot \pi$. Let
$s_4^{} ~ = ~ s_1^{} |_q^{}$ and
$s_5^{} ~ = ~ s_4^{} \left[ \vphantom{b_a^c}
  h(x\theta) + h(y\theta) \right]_{\pi}^{}$.
Thus 
$s_4^{} \; \longrightarrow_{R_h^{},AC}^{\pi} \; s_5^{}$ and
$s_5^{}$ is reducible at the root by~$\longrightarrow_{R_2^{},ACh}^{}$.
By Corollary~\ref{R2ACh}, $s_5^{} \!\big\downarrow_{R_h^{}, AC}^{}$ is
reducible at the root by~$\longrightarrow_{R_2^{},AC}^{}$, i.e.,
$s_5^{} \!\big\downarrow_{R_h^{}, AC}^{} \; \approx_{AC}^{} \; l\beta$
for some left-hand side~$l$. Thus (putting it all together),
$s_1^{} |_q^{} \; \longrightarrow_{R_h^{},AC}^{!} \; s_5^{} \!\big\downarrow_{R_h^{}, AC}^{} \; \approx_{AC}^{} \; l\beta$.
Thus, again by Corollary~\ref{R2ACh}, $s_1^{} |_q^{}$ is reducible
by~$\longrightarrow_{R_2^{},ACh}^{}$.
\end{proof}


Finally, we can prove the convergence. 
\begin{appxlem}
$\longrightarrow_{R_2, ACh}^{} \,$ is $ACh$-convergent.
\end{appxlem}
\begin{proof}
  We prove by contradiction. Suppose $\longrightarrow_{R_2, ACh}^{} \,$ is not~$ACh$-convergent.
  Since $\longrightarrow_{R_2, ACh}^{} \,$ is terminating, there must
  be terms $s$ and $t$ such that $s$ and $t$ are in normal form modulo~$\longrightarrow_{R_2, ACh}^{} \,$
  and $s\;\not\approx_{ACh}\;t$.
  Since $\longrightarrow_{R_h, AC}^{} \,$ is \emph{AC}-convergent, the
  normal forms of $s$ and $t$ modulo~$\longrightarrow_{R_h, AC}^{} \,$
  cannot be $AC$-equal. Let $\widehat{s}$ and $\widehat{t}$ be respectively the
  normal forms of $s$ and $t$ modulo~$\longrightarrow_{R_h, AC}^{}$.
  If either $\widehat{s}$ or $\widehat{t}$ is reducible by~$\longrightarrow_{R_2, ACh}^{} \,$,
  then, by Lemma~\ref{commute}, either $s$ or~$t$ will be
  reducible by~$\longrightarrow_{R_2^{},ACh}^{}$.
  But since $\longrightarrow_{R_1^{}, AC}^{} \,$ is \emph{AC}-convergent,
  either $\widehat{s}$ or $\widehat{t}$ must be reducible by~$\longrightarrow_{R_1^{}, AC}^{}$. This
  also leads to a contradiction, since
  $\longrightarrow_{R_1^{},AC}^{} \; = \; \longrightarrow_{R_2^{},AC}^{} \; \cup \; \longrightarrow_{R_h^{},AC}^{}$.
\end{proof}


\label{appendix2}



\section{Asymmetric Combination Algorithm}
\begin{algorithm}
	\caption{Combination $\Delta_1 = ACUNh$ and $\Delta_2= F_{\Omega}$}
	\label{Alg:comb}
	\begin{algorithmic}
		\REQUIRE  $\Gamma_0$, the initial unification problem over the signature $\Sigma_1 \cup \Sigma_2$.
		\WHILE{there exist non-pure terms}
		\STATE {\bf Right abstraction:} For each alien subterm $t_1$ of $t$, let $x$ be
		a variable not occurring in the current system 
		and let $t'$ be the term obtained from replacing $t_1$ by $x$ in $t$. Then the
		original equation is replaced by two equations 
		$s =^{\downarrow} t'$ and $x =^{\downarrow} t_1$. 
		\STATE {\bf Left abstraction:} For each alien subterm $s_1$ of $s$ let $x$ be a
		variable not occurring in the current system 
		and let $s'$ be the term obtained from replacing $s_1$ by $x$ in $s$. Then the
		original equation is replaced by two equations
		$s' =^{\downarrow} t $ and $s_1 =^{\downarrow} x$.
		\ENDWHILE
		\WHILE{there exist non-pure equations}
		\STATE  {\bf Split non-pure equations}: Each non-pure equation of the form $s =^{\downarrow} t$
		is replaced
		by two equations $s =^{\downarrow} x$, $x =^{\downarrow} t$ where $x$ is always a new variable.
		The results is a system $\Gamma_2$ of pure equations.
		\ENDWHILE
		\STATE  {\bf Variable identification}: Consider all the possible partitions of the set of variables.
		Each
		partition produces a new system $\Gamma_3$ as follows. The variables in each class of the partition 
		are ``identified'' with each other by choosing an element of the class as a representative
		and replacing in the 
		system all occurrences of variables in each class by their representative.
		\STATE {\bf Choose an ordering and theory indices}: For each $\Gamma_3$ we consider all possible
		strict orderings~$<$ on the variables of the system and all mappings $ind$ from the set of
		variables into
		the set of indices $\{ 1,2 \}$. Each pair $(< , ind)$ yields a new system $\Gamma_4$.
		\STATE {\bf Split the system}: Each $\Gamma_4$ is split into two systems $\Gamma_{5,1}$ and
		$\Gamma_{5,2}$, the first containing only 1-equations and the second only 2-equations. 
		In the system $\Gamma_{5, i}$ the variables of index $j \neq i$ are treated as constants.
		Each $\Gamma_{5,i}$ is now a unification problem with linear constant restriction,
		where the linear ordering
		$<$ defines the set $V_c$ for each constant $c$ corresponding to an index $j \neq i$ variable. 
		
		\STATE {\bf Build $ACUNh$ automata}:
		For each $\Gamma_{5, 1}^{i}$ construct the corresponding automata 
		including the linear constant restriction and dis-equalities of the
		following form:
		\begin{itemize}
			\item $x \neq y$, if $x$ and $y$ are unique variables from $\Gamma_{4}$
			not equated by the partition. 
			\item $x \neq y$, if $x$ and $y$ are unique variables from $\Gamma_{4}$,
			$x \in \Gamma_{5, 1}$, and $y \in \Gamma_{5,2}$. 
			\end{itemize}  
			
			\STATE {\bf Solve $\Gamma_{5,1}^{i}$ and $\Gamma_{5,2}^{i}$}: For the initial system $\Gamma_0$ let 
			$\{ (\Gamma_{5,1}^{1} , \Gamma_{5,2}^{1}), \ldots , (\Gamma_{5,1}^{n} , \Gamma_{5,2}^{n}) \}$ be the
			output of the decomposition. 
			\FOR{$i = 1, \ldots , n$ and $j = 1,2$,}
			\IF{the automata for $\Gamma_{5, 1}^i$ have a non-empty intersection and the 
				syntactic unification algorithm on $\Gamma_{5,2}^i$ returns
				an injective and theory-preserving solution}
				\STATE return true.  
				\ENDIF
				\ENDFOR
				
				\end{algorithmic}
				
				\end{algorithm}

\end{document}